\documentclass[letterpaper]{article} 
\usepackage{aaai23}  
\usepackage{times}  
\usepackage{helvet}  
\usepackage{courier}  
\usepackage[hyphens]{url}  
\usepackage{graphicx} 
\urlstyle{rm} 
\usepackage{natbib}  
\usepackage{caption} 
\DeclareCaptionStyle{ruled}{labelfont=normalfont,labelsep=colon,strut=off} 
\frenchspacing  
\setlength{\pdfpagewidth}{8.5in}  
\setlength{\pdfpageheight}{11in}  
%
\usepackage{algorithm}
\usepackage{algorithmic}
\usepackage{amsfonts}
\usepackage{amsbsy}
\usepackage{amsmath}
\usepackage{amsthm}
\newtheorem{defi}{Definition}
\newtheorem{theorem}{Theorem}
\newtheorem*{theorem*}{Theorem}
\newtheorem{lemma}{Lemma}

%
\usepackage{newfloat}
\usepackage{listings}
\lstset{%
	basicstyle={\footnotesize\ttfamily},
	numbers=left,numberstyle=\footnotesize,xleftmargin=2em,
	aboveskip=0pt,belowskip=0pt,%
	showstringspaces=false,tabsize=2,breaklines=true}
\floatstyle{ruled}
\newfloat{listing}{tb}{lst}{}
\floatname{listing}{Listing}
%
%
\pdfinfo{
/Title (Multi-Agent_Model_Learning_as_a_Multi-Agent_Learning_Problem)
/Author (AAAI Press Staff, Pater Patel Schneider, Sunil Issar, J. Scott Penberthy, George Ferguson, Hans Guesgen, Francisco Cruz, Marc Pujol-Gonzalez)
/TemplateVersion (2023.1)
}

\setcounter{secnumdepth}{2} 

\title{Models as Agents: Optimizing Multi-Step Predictions of Interactive Local Models in Model-Based Multi-Agent Reinforcement Learning}
\author{
    Zifan Wu\textsuperscript{\rm 1}, 
    Chao Yu\textsuperscript{\rm 1,2}\thanks{Corresponding to yuchao3@mail.sysu.edu.cn.}, 
    Chen Chen\textsuperscript{\rm 3}, 
    Jianye Hao\textsuperscript{\rm 3}, 
    Hankz Hankui Zhuo\textsuperscript{\rm 1}
}
\affiliations{
    \textsuperscript{\rm 1}School of Computer Science and Engineering, Sun Yat-sen University, Guangzhou, Guangdong, China\\
    \textsuperscript{\rm 2}Pengcheng Laboratory, Shenzhen, Guangdong, China\\
    \textsuperscript{\rm 3}Huawei Noah's Ark Lab, Beijing, China


%
}

\usepackage{bibentry}

\begin{document}

\maketitle

\begin{abstract}
Research in model-based reinforcement learning has made significant progress in recent years.
Compared to single-agent settings, the exponential dimension growth of the joint state-action space in multi-agent systems dramatically increases the complexity of the environment dynamics, which makes it infeasible to learn an accurate global model and thus necessitates the use of agent-wise local models.
However, during multi-step model rollouts, the prediction of one local model can affect the predictions of other local models in the next step.
As a result, local prediction errors can be propagated to other localities and eventually give rise to considerably large global errors.
Furthermore, since the models are generally used to predict for multiple steps, simply minimizing one-step prediction errors regardless of their long-term effect on other models may further aggravate the propagation of local errors.
To this end, we propose Models as AGents (MAG), a multi-agent model optimization framework that reversely treats the local models as multi-step decision making agents and the current policies as the dynamics during the model rollout process.
In this way, the local models are able to consider the multi-step mutual affect between each other before making predictions.
Theoretically, we show that the objective of MAG is approximately equivalent to maximizing a lower bound of the true environment return.
Experiments on the challenging StarCraft II benchmark demonstrate the effectiveness of MAG.
\end{abstract}

\section{Introduction}
\label{sec:intro}
\noindent Model-Based Reinforcement Learning (MBRL)~\citep{moerland2020model, luo2022survey} aims to improve the sample efficiency of model-free methods by learning an approximate world model and then using it to aid policy learning.
Despite the success in single-agent settings, there are still limited works concentrating on MBRL in multi-agent systems.
In these systems, the exponential dimension growth of the joint state-action space dramatically increases the complexity of the environment dynamics, making it infeasible to learn an accurate global model~\citep{zhang2021model,wang2022model}.
Thus, a common practice is to make use of local agent-wise models which only require partial information and then predict the most relevant information for policy learning of each corresponding agent, so as to alleviate the issue of high dimension and avoid modelling the whole complicated dynamics~\citep{kim2020communication,zhang2021model}.

\begin{figure}[t]
\centering
\includegraphics[width=1.05\columnwidth,height=90pt]{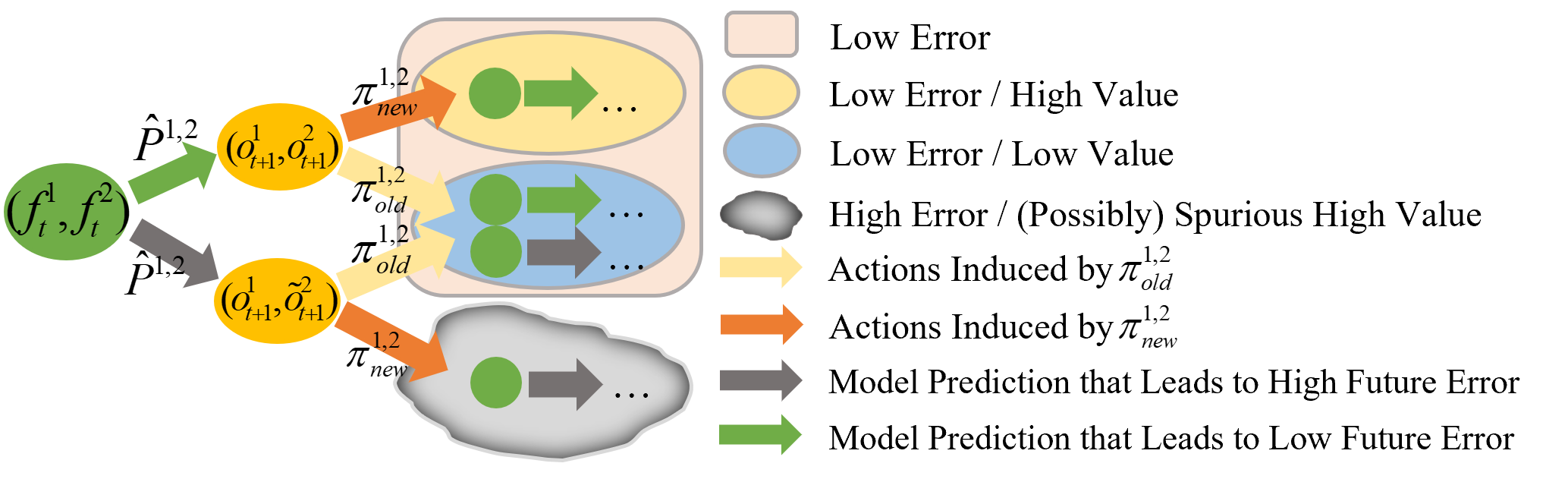} 
\caption{Intuition for multi-step interactions between the local models and the policies.}
\label{img:illu}
\end{figure}
One of the most commonly used paradigms in  Multi-Agent Reinforcement Learning (MARL) is Centralized Training with Decentralized Execution (CTDE)~\citep{lowe2017multi,foerster2018counterfactual,rashid2018qmix,wang2021qplex,wu2021coordinated,fu2022revisiting}, which allows the use of global information in the policy training phase, yet retains local observability of each agent during execution.
MAMBA~\citep{egorov2022scalable}, a recently proposed multi-agent MBRL method under the CTDE paradigm, achieves state-of-the-art sample efficiency in several challenging benchmarks, especially the StarCraft II challenge~\citep{samvelyan2019starcraft}.
To take full advantage of the centralized training phase, MAMBA utilizes the Attention mechanism~\citep{vaswani2017attention} to extract information for each local model from the global information, i.e., $(f^1,\ldots, f^N) = \text{Attention}(l^1, \ldots, l^N)$, where $N$ is the number of agents, $l^i:=(o^i, a^i)$ denotes the local information of agent $i$, $o^i$ and $a^i$ are the observation and action of agent $i$ respectively, and $f^i$ denotes the extracted feature for the local model of agent $i$.
However, since the Attention block fuses all local predictions obtained from the local models, the prediction of each local model, i.e., $\hat{P}^i(o^{i'}|f^i)$, can affect the subsequent predictions of other local models in the next rollout step.

Furthermore, while generally trained to simply minimize one-step prediction errors, the local models are usually not able to take into account the aforementioned multi-step errors induced by the interactions between the local models and policies. 
As a result, prediction errors of one local model can be propagated to the others and eventually induce large accumulative global errors during multi-step model rollouts, which would hinder the learning of policies.

Figure~\ref{img:illu} gives an intuitive illustration of the above discussion: 
Given local features $(f_t^1, f_t^2)$ w.r.t. agent $1$ and $2$ at step $t$,
1) the local models predict  $\hat{P}^1(o^1_{t+1}|f_t^1)=1$ and $\hat{P}^2(o^2_{t+1}|f_t^2)=\hat{P}^2(\tilde{o}^2_{t+1}|f_t^2)=50\% $;
and 2) under the previous joint policy $\pi_{\text{old}}^{1, 2}$, both prediction of the next joint observation, i.e., $(o^1_{t+1},o^2_{t+1})$ and $ (o^1_{t+1}, \tilde{o}^2_{t+1})$, will lead the trajectory to go into regions with low values predicted by the value function, hence $\pi_{\text{old}}^{1, 2}$ is updated to $\pi_{\text{new}}^{1, 2}$ to explore regions with potential high values. 
But under the updated joint policy $\pi_{\text{new}}^{1,2}$, the subsequent rollout trajectory starting from $(o^1_{t+1}, \tilde{o}^2_{t+1})$ would lead to considerably larger model errors compared to the trajectory starting from $(o^1_{t+1}, o^2_{t+1})$.
Thus, to reduce accumulative model errors along rollout trajectories, the local models should learn to coordinate with each other while quickly adapting to the current joint policy.
Formally, as will be shown in Section~\ref{sec:method}, smaller accumulative model errors could provide stronger performance guarantee.

In this work, we propose Models as AGents (MAG), a multi-agent model optimization framework which considers the interactions between local models during multi-step model rollout.
Based on the MAMBA framework, the whole environment dynamics is decomposed into agent-wise local models, and our key idea lies in reversely considering the local models as multi-step decision makers while fixing the current joint policy to serve as the environment.
During model learning, the local models perform multi-step interactions with each other as well as the policies, so as to take the long-term global effect of immediate local predictions into account and generate trajectories with less accumulative errors.
Theoretically, we show the necessity of considering the local model interactions and minimizing the multi-step accumulative errors.
Empirically, the results on several challenging tasks in the StarCraft II benchmark demonstrate that MAG significantly outperforms MAMBA in low data regime, and the model error analysis further verifies the effectiveness of our model learning mechanism.

\section{Background}
\label{sec:back}
In this section, we first introduce the problem setting of MARL and MBRL, and then give a brief description of MAMBA, the aforementioned state-of-the-art model-based MARL method.
\paragraph{MARL}
In this work, we focus on the fully cooperative multi-agent systems that can be formalized as Dec-POMDPs~\citep{oliehoek2016concise}, which are defined by tuple $(N, S,  \Omega, O, A, R, P, \gamma)$, 
where $N$ is the number of agents, 
$S$  the set of global states, 
$\Omega$ the observation space shared by the agents, 
$O(s, i)$ the function deriving partial observations for each agent $i$ from a global state $s\in S$, 
$A$ the action space, 
$R(s, a^1, \ldots, a^N)$ a shared scalar reward function that takes $s\in S$ and $a^i\in A, i\in\{1,\ldots,N\}$ as input,
and $\gamma\in[0, 1)$ the discount factor.
Each agent has an action-observation history $\tau^i\in T\equiv(\Omega\times A)^*$.
We use the bold symbol $\boldsymbol{o},\boldsymbol{a}, \boldsymbol{\pi}$ to denote the joint observation $\{o^1,\ldots,o^N\}$, action $\{a^1,\ldots,a^N\}$ and policy $\{\pi^1,\ldots,\pi^N\}$, respectively.
At each timestep, agent $i$ chooses an action $a^i\in A$ according to its policy $\pi^i(a^i|\tau^i)$ (We replace $\tau^i$ by $o^i$ in our analysis for brevity).
The environment then returns the reward signal $R(s, \boldsymbol{a})$ and shifts to the next state according to the transition function $P(s'|s, \boldsymbol{a})$.
The expected return of joint policy $\boldsymbol{\pi}$ is defined by $J(\boldsymbol{\pi}) := \mathbb{E}_{\boldsymbol{\pi}}[\sum_{t'=0}^{\infty}\gamma^{t'}R_{t+t'}|s_t, \boldsymbol{a}_t]$.
Some previous works~\citep{wang2020few, liu2020multi} have shown that it is possible to significantly reduce the state space in large environments to only relevant information for the agents’ decision making. Hence, in this paper we assume that the joint observation-action, i.e.,  $(\boldsymbol{o}, \boldsymbol{a})$, is sufficient to predict the next joint observation $\boldsymbol{o}'$ and the global reward $R$.
Serving as a special case of Dec-POMDPs, MMDPs~\citep{boutilier1996planning} assume global observability of each agent and are adopted to reformulate the model rollout process in our work.

\paragraph{MBRL}
MBRL methods learn a model $\hat{P}$ that approximates the unknown dynamics $P$, and then use this model to assist policy learning.
While the model can be utilized in various ways~\citep{ feinberg2018model,d2020learn,curi2020efficient,song2021pc, amos2021model}, this work focuses on one of the most common usages, i.e., generating pseudo samples to enrich the dataset, so as to accelerate policy learning and reduce interactions with the true environment~\citep{sutton1991dyna,chua2018deep,luo2018algorithmic,janner2019trust}.
The expected return of policy $\boldsymbol{\pi}$ predicted by model $\hat{P}$ is denoted as $J^{\hat{P}}(\boldsymbol{\pi}):=\mathbb{E}_{\boldsymbol{\pi}, \hat{P}}[\sum_{t'=0}^{\infty}\gamma^{t'}R_{t+t'}|s_t, \boldsymbol{a}_t]$.
As a state-of-the-art MBRL method in discrete environments, Dreamer V2~\citep{hafner2020mastering} makes use of the RSSM model~\citep{hafner2019learning} to learn the dynamics of the environment in the latent space by minimizing the evidence lower bound~\citep{kingma2013auto}.

\label{sec:mamba}
\paragraph{MAMBA}
Building upon Dreamer V2, MAMBA~\citep{egorov2022scalable} also learns the environment dynamics in the latent space, and makes use of the Attention mechanism~\citep{vaswani2017attention} to extract features for each local models from global information.
To disentangle the agents' latent space and encourage the local models to be mutually independent when making predictions, MAMBA proposes to maximize the mutual information between the latent state and the previous action of the corresponding agent.
In addition, the method allows communicating with the neighbouring agents via discrete messages to sustain world models during the execution phase, thus regarding world models as an instance of communication.
To the best of our knowledge, MAMBA is the first model-based MARL method that improves the sample efficiency of model-free methods by an order of magnitude on the challenging StarCraft II benchmark.
Nevertheless, compared to the performance of Dreamer V2 in Atari games~\citep{bellemare2013arcade} and MBPO~\citep{janner2019trust} in the MuJoCo~\citep{todorov2012mujoco} benchmark, the overall improvement of sample efficiency, as well as the asymptotic performances in some difficult tasks achieved by MAMBA are still relatively limited, which may be due to the high complexity of the dynamics of multi-agent systems.


\section{Method}
\label{sec:method}
In this section, we first propose a theoretical result of how the prediction errors of agent-wise local models affect the overall policy performance, based on which we reformulate the model rollout process as a multi-agent sequential decision making problem.
In the last subsection, we present the practical implementation of MAG and further detail some important steps in the algorithm.

\subsection{Theoretical Result}
\label{sec:theory}
Since general MBRL methods optimize the policy by maximizing the expected return predicted by the model, one of the most crucial theoretical problems for MBRL is to bound the gap between the model predicted return and the true environment return.
Our major theoretical result is the following theorem that bounds the performance gap:
\begin{theorem}
\label{thm:upperbound}
Denoting the set of local models by $\hat{P}:=\{\hat{P}^i \}_{i=1}^N$ and the data-collecting policy obtained in the last iteration by $\boldsymbol{\pi}_D$, the gap between the expected return of the model and the environment can be bounded as~\footnote{In our theoretical analysis, the reward function is assumed to be known. Note that this is a commonly adopted assumption since the sample complexity of learning the reward function with supervised learning is a lower order term compared to the one of learning the transition model~\citep{gheshlaghi2013minimax}.}:
    \begin{align}
    \label{eq:upperbound}
\left| J(\boldsymbol{\pi})-J^{\hat{P}}(\boldsymbol{\pi})\right| \leq
\frac{R_{max}}{(1-\gamma)^2}\left(2\epsilon_{\boldsymbol{\pi}}+
(1-\gamma)\sum_{t=1}^\infty\gamma^t\epsilon_{m_t}\right),
\end{align}
where $\epsilon_{\boldsymbol{\pi}}:=\max_{\boldsymbol{o}} D_{TV}(\boldsymbol{\pi}_D(\cdot| \boldsymbol{o})\|\boldsymbol{\pi}(\cdot| \boldsymbol{o}))$ denotes the distribution shift of the joint policy between two consecutive iterations,
$\epsilon_{m_t}:=\mathbb{E}_{\boldsymbol{o}\sim \hat{P}_{t-1}(\cdot;\boldsymbol{\pi})} \left[\max_{\boldsymbol{a}}\sqrt{2\sum_{i=1}^N\mathbb{E}_{\boldsymbol{o}'\sim \hat{P}(\cdot|\boldsymbol{o}, \boldsymbol{a})}\left[ \log\frac{\hat{P}^i(o^{i'}|\boldsymbol{o}, \boldsymbol{a})}{\sqrt[N]{P(\boldsymbol{o}'|\boldsymbol{o}, \boldsymbol{a})}}   \right]}\right] $ denotes the upper bound of the $i$-th model's error at timestep $t$ of the model rollout trajectory,
$\hat{P}_{t-1}(\boldsymbol{o};\boldsymbol{\pi})$ denotes the distribution of joint observation at $t-1$ under $\hat{P}$ and $\boldsymbol{\pi}$, 
and $R_{max}:=\max_{s, \boldsymbol{a}} R(s, \boldsymbol{a})$.
\end{theorem}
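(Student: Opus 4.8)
The plan is to follow the telescoping-style argument used for single-agent model-based return bounds \citep{janner2019trust,luo2018algorithmic} and then to insert one extra step that aggregates the agent-wise local errors into a single global transition error. First I would reduce the return gap to a discounted sum of per-step distribution discrepancies. Writing $P^t_{\boldsymbol{\pi}}$ and $\hat{P}^t_{\boldsymbol{\pi}}$ for the marginal distributions of the joint observation at step $t$ under the true dynamics and under the model (both rolled out with $\boldsymbol{\pi}$), and using $0\le R\le R_{max}$ together with the fact that two expectations of a bounded function differ by at most $R_{max}$ times the total variation of the underlying distributions, I obtain
\begin{equation}
\left|J(\boldsymbol{\pi})-J^{\hat{P}}(\boldsymbol{\pi})\right| \le R_{max}\sum_{t=0}^{\infty}\gamma^{t}\,D_{TV}\!\left(P^t_{\boldsymbol{\pi}}\,\|\,\hat{P}^t_{\boldsymbol{\pi}}\right).
\end{equation}
This isolates the task as controlling the TV distance between the two time-$t$ trajectory marginals.

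Second I would bound each marginal discrepancy by an accumulation of one-step errors, inserting an intermediate chain that evolves under the true dynamics but the data-collecting policy $\boldsymbol{\pi}_D$. The triangle inequality then splits $D_{TV}(P^t_{\boldsymbol{\pi}}\|\hat{P}^t_{\boldsymbol{\pi}})$ into a model-versus-truth part and a policy-shift part. For each part I would peel off one transition at a time, using that stochastic kernels are non-expansive in $L_1$, so that the time-$t$ gap is at most the sum of the $t$ one-step gaps: the first part accumulates one-step transition errors $D_{TV}(\hat{P}\|P)$ evaluated along the model rollout $\hat{P}_{t-1}(\cdot;\boldsymbol{\pi})$, matching the expectation in $\epsilon_{m_t}$, while the second accumulates $D_{TV}(\boldsymbol{\pi}_D(\cdot|\boldsymbol{o})\|\boldsymbol{\pi}(\cdot|\boldsymbol{o}))\le\epsilon_{\boldsymbol{\pi}}$. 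Substituting back into the discounted sum and evaluating the resulting geometric series in $\gamma$ produces the $\frac{1}{(1-\gamma)^2}$ prefactor together with the separate coefficients $2\epsilon_{\boldsymbol{\pi}}$ and $(1-\gamma)\sum_t\gamma^t\epsilon_{m_t}$ of Eq.~\eqref{eq:upperbound}; the constant $2$ on the policy term emerges from this bookkeeping.

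The key agent-wise step, which I expect to be the main obstacle, is rewriting the global per-step transition error in the local form stated in the theorem. Assuming the learned joint model factorizes over agents, $\hat{P}(\boldsymbol{o}'|\boldsymbol{o},\boldsymbol{a})=\prod_{i=1}^N \hat{P}^i(o^{i'}|\boldsymbol{o},\boldsymbol{a})$, I would combine the logarithms so that $\sum_{i=1}^N \mathbb{E}_{\boldsymbol{o}'\sim\hat{P}}\big[\log\frac{\hat{P}^i(o^{i'}|\boldsymbol{o},\boldsymbol{a})}{\sqrt[N]{P(\boldsymbol{o}'|\boldsymbol{o},\boldsymbol{a})}}\big] = \mathbb{E}_{\boldsymbol{o}'\sim\hat{P}}\big[\log\frac{\prod_i \hat{P}^i}{P}\big]=D_{KL}(\hat{P}(\cdot|\boldsymbol{o},\boldsymbol{a})\,\|\,P(\cdot|\boldsymbol{o},\boldsymbol{a}))$, where the $N$-th root exactly cancels the $N$ copies of $\log P$ contributed by the sum over agents. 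Pinsker's inequality, $2\,D_{TV}^2\le D_{KL}$, then gives $2\,D_{TV}(\hat{P}\|P)\le\sqrt{2\,D_{KL}(\hat{P}\|P)}$, so taking $\max_{\boldsymbol{a}}$ and averaging over $\boldsymbol{o}\sim\hat{P}_{t-1}(\cdot;\boldsymbol{\pi})$ shows that $\epsilon_{m_t}$ upper-bounds (twice) the per-step transition error fed into the telescoping of the previous paragraph.

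The genuinely nontrivial content lies in this last step rather than in the telescoping, which is standard but fiddly in its constants. I expect two points to require care: justifying the factorization identity $\hat{P}=\prod_i\hat{P}^i$ (the conditional independence of the local models) that makes the $\sqrt[N]{\cdot}$ normalization collapse cleanly into a single joint KL divergence, and ensuring that the transition error is consistently measured along the \emph{model's} own rollout distribution $\hat{P}_{t-1}(\cdot;\boldsymbol{\pi})$, which fixes the direction in which the one-step kernels are peeled off and determines whether the policy-shift term attaches to the model chain or the intermediate chain. Reconciling these choices so that the accumulated indices and the factor of $2$ from Pinsker combine into exactly $(1-\gamma)\sum_t\gamma^t\epsilon_{m_t}$ is the part I would verify most carefully.
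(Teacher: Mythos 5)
Your proposal is correct and follows essentially the same route as the paper's proof: reduce the return gap to a discounted sum of total-variation distances between time-$t$ marginals, insert the data-collecting chain $(P,\boldsymbol{\pi}_D)$ via the triangle inequality, peel off one-step errors along the model's own rollout distribution $\hat{P}_{t-1}(\cdot;\boldsymbol{\pi})$, and collapse the agent-wise log-ratios into a joint KL divergence via the factorization $\hat{P}=\prod_{i=1}^N\hat{P}^i$ before applying Pinsker's inequality. The only difference is cosmetic---the paper applies the triangle inequality at the level of returns (adding and subtracting $J(\boldsymbol{\pi}_D)$ and then invoking a single lemma twice) rather than at the level of state marginals---and both organizations produce the same bookkeeping, including the factor $2$ on $\epsilon_{\boldsymbol{\pi}}$ and the $(1-\gamma)\sum_t\gamma^t\epsilon_{m_t}$ model term.
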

\begin{proof}
Please refer to Appendix~A.
\end{proof}
It is worth noting that Theorem~\ref{thm:upperbound} is not simply a multi-agent version of the results that have been derived in the single-agent setting~\citep{luo2018algorithmic,janner2019trust}.
The key difference is that Theorem~\ref{thm:upperbound} does not scale up the step-wise model prediction errors (i.e., $\epsilon_{m_t}$) to their maximum over timesteps, which not only leads to a tighter bound that provides stronger guarantee for policy improvement (see Appendix~A for proof), but also indicates how the interactions between local models affect the overall performance error bound:
Note that by definition the model error at step $t$, i.e., $\epsilon_{m_t}$, depends on the distribution of the joint observation at the last timestep, i.e., $\hat{P}_{t-1}(\boldsymbol{o};\boldsymbol{\pi})$, and except for the first step of rollout trajectories, this distribution further depends on the current policies $\boldsymbol{\pi}$ and the prediction of other local models at the last timestep, i.e., $\hat{P}_{t}(\boldsymbol{o}; \boldsymbol{\pi})=\mathbb{E}_{\boldsymbol{o}\sim \hat{P}_{t-1}(\cdot;\boldsymbol{\pi}), \boldsymbol{a}\sim\boldsymbol{\pi}(\cdot|\boldsymbol{o}))}[\prod_{i=1}^N\hat{P}^i(o^i|\boldsymbol{o}, \boldsymbol{a})]$.
Thus, the errors of the local models can affect each other during multi-step model rollout, and this mutual affect can largely determine the tightness of the overall error bound.

Based on this result, a performance lower bound with regard to the policy shift and the model error can be written as: 
$J(\boldsymbol{\pi})\geq J^{\hat{P}}(\boldsymbol{\pi}) - 2C(\hat{P}, \boldsymbol{\pi})$, where $C(\hat{P}, \boldsymbol{\pi})$ denotes the right hand side of Eq.~(\ref{eq:upperbound}).
Then, in an ideal manner, applying the following update rule repeatedly can \textbf{guarantee the monotonic improvement} of the joint policy:
\begin{align}
\label{eq:monotonic}
    \hat{P}, \boldsymbol{\pi} \leftarrow \mathop{\arg\max}_{\hat{P}, \boldsymbol{\pi}} J^{\hat{P}}(\boldsymbol{\pi}) - C(\hat{P}, \boldsymbol{\pi}).
\end{align}
The update rule in Eq.~(\ref{eq:monotonic}) is often impractical since it involves an exhaustive search in the joint state-action space to compute $C$, and requires full-horizon rollouts in the model for estimating the accumulative model errors.
Thus, similar to how algorithms like TRPO~\citep{schulman2015trust} approximate their theoretically monotonic version, this update rule can be approximated by maximizing the expected model return (i.e., $J^{\hat{P}}(\boldsymbol{\pi})$) while keeping the accumulative model error (i.e., $\sum_{ t}\gamma^t\epsilon_{m_t}$) small.
As for the policy shift term $\epsilon_\pi$, though the bound suggests that this term should also be constrained, we found empirically that it is sufficient to only control the model error.
This may be explained by the relatively small scale of policy shift w.r.t. the model error, as observed in~\citep{janner2019trust}.

By treating $-\epsilon_{m_t}$ as the ``reward'' shared by the local models at timestep $t$, the learning of the local models can be regarded as an optimization process of multi-step predictions of the local models, where the objective is to minimize the global prediction errors accumulated along the model rollout trajectories.
Note that the definition of $\epsilon_{m_t}$ involves the expectation under the current joint policy $\boldsymbol{\pi}$, thus during model learning, the joint policy can be fixed to serve as a background environment, while the local models reversely play the role of decision-makers that should learn to maximize the ``expected return'' (i.e., $-\mathbb{E}_{\boldsymbol{\pi}, \hat{P}}[\sum_{ t}\gamma^t\epsilon_{m_t}]$) under the current joint policy.
Building on the above theoretical intuition, we now propose the MAG framework in the next subsection.

\subsection{Problem Reformulation}
\label{sec:reformulation}
To formalize the intuition of reversing the roles of the models and the agents during model learning, we first define the \textit{model MMDP} to reformulate the model rollout process and then outline the overall model optimization of MAG as a generic solution to the reformulated problem.
\begin{defi}
The \textit{model MMDP} is defined by tuple $(N, \gamma, S_m, A_m, P_m, R_m)$, where $N$ is the number of local models, $\gamma$ is the discount factor, $S_m, A_m, P_m$ and $R_m$ are the model-state space, the model-action space, the model-transition function and the scalar model-reward function, respectively. 

At each timestep $t$, each local model $\hat{P}^i$ receives model-state $s_{m_t}:=(\boldsymbol{o}_t, \boldsymbol{a}_t)\in S_{m_t}$, then takes a model-action $a^i_{m_t}:=o^i_{t+1}$ according to its ``policy'' $\hat{P}^i(a^i_{m_t}|s_{m_t})$. 
After that, the model-transition function returns the next model-state by $P_m(s_{m_{t+1}}|s_{m_t}, \boldsymbol{a}_{m_t}):=\boldsymbol{\pi}(\boldsymbol{a}_{t+1}|\boldsymbol{o}_{t+1})\prod_{i=1}^N\hat{P}(o^i_{t+1}|\boldsymbol{o}_t, \boldsymbol{a}_t)$, while the model-reward function returns a scalar reward by $R_m(s_{m_t}, a_{m_t}):=\sum_{i=1}^N\log\frac{\hat{P}^i(o^i_{t+1}|\boldsymbol{o}_t, \boldsymbol{a}_t)}{\sqrt[N]{P(\boldsymbol{o}_{t+1}|\boldsymbol{o}_t, \boldsymbol{a}_t)}}$.
\end{defi}
Using the \textit{model MMDP} formulation, the model learning phase can be viewed as a multi-agent learning problem, where the current joint policy is fixed to serve as the environment dynamics and the local models, now as the decision makers, interact with each other and learn to minimize the accumulative prediction error under the current joint policy.
From this perspective, the local models trained by minimizing one-step prediction errors for each individual can be intuitively interpreted as greedy independent learners, which are often considered shortsighted and may struggle to learn cooperative behaviors.
To minimize the accumulative global errors, the local models must instead consider the long-term global effect of immediate local predictions.

Note that in the competitive or mixed cooperative-competitive scenarios, the goal of each local model is generally to assist policy learning of only one individual agent, thus in those scenarios the local models would aim at minimizing the individual accumulative errors instead of the global summation of model errors.
Consequently, in those scenarios, the model rollout process can be defined as a \textit{Markov Game}~\citep{shapley1953stochastic}, where the reward function can be defined respectively for each local model.
Since the major focus of this work is the fully cooperative scenarios, we leave the above discussion as a possible motivation for future work.

Similar to the optimization of the policy, the objective of model learning can be written as $\arg\max_{\hat{P}} J^{\boldsymbol{\pi}}\big(\hat{P}\big)$ where $J^{\pi}\big(\hat{P}\big) :=\mathbb{E}_{\boldsymbol{\pi}, \hat{P}}[\sum_t\gamma^t R_m(s_{m_t}, \boldsymbol{a}_{m_t})] $.
Due to this duality between the learning of the policies and the models, we call this overall model-based MARL method by Models as AGents (MAG).
Specifically, during model learning, the local models first generate samples by actively interacting with the current joint policy (now viewed as the background environment), and then optimize the expected return $J^{\boldsymbol{\pi}}\big(\hat{P}\big)$ accordingly.

\subsection{Practical Implementation}
\label{sec:practical}
To give a practical solution to the \textit{model MMDP}, we describe the implementation of MAG in this subsection.
\begin{algorithm}[t]
\caption{MAG}
\label{alg:mpc}
\begin{algorithmic}[1] 
\STATE Initialize joint policy $\boldsymbol{\pi}$, predictive local models $\hat{P}^i, i=1, 2,\ldots,N$, model-reward predictor $\hat{R}_m$, environment dataset $\mathcal{D}_e$ and model dataset $\mathcal{D}_m$.
\FOR{$N$ episodes}
\STATE Collect an episode of real-environment data using $\boldsymbol{\pi}$ and then add the data to $\mathcal{D}_e$;
\STATE Train models $\{\hat{P}^i\}_{i=1}^N$ on $\mathcal{D}_e$ via one-step prediction loss;
\STATE Train $\hat{R}_m$ on $\mathcal{D}_e$ via supervised learning on $\mathcal{D}_e$;

\FOR{$M$ model rollouts}
\STATE Sample joint observations $\boldsymbol{o}$ uniformly from $\mathcal{D}_e$ and use them as the initial observations for our rollout trajectories;
\FOR{$k$ rollout steps}
\STATE Each agent takes action $a^i$ according to $\pi^i(\cdot|o^i)$;
\STATE Initialize $s_{m_0}=(\boldsymbol{o}, \boldsymbol{a})$ and perform $L$ parallelized rollouts for $H$ steps by actively interacting the local models with $\boldsymbol{\pi}$: $\{s_{m_{0, j}} , \boldsymbol{a}_{m_{0, j}}, R_{m_{0, j}}, s_{m_{1, j}}, \boldsymbol{a}_{m_{1, j}},\ldots, s_{m_{H, j}} \}_{j=1}^L$, where $s_{m_{0, j}}=s_{m_0}, \forall j\in\{1, 2, \ldots, N\}$;
\STATE Compute $r_{m_j}=\sum_{t=0}^{H-1}\hat{R}_m(s_{m_{t, j}}, \boldsymbol{a}_{m_{t, j}})$ for each rollout trajectory $j$;
\STATE Take $(\boldsymbol{o}', \boldsymbol{R}) = \boldsymbol{a}_{m_{0, \mathop{\arg\max}_j r_{m_j}}}$;
\STATE Store $(\boldsymbol{o}, \boldsymbol{a}, \boldsymbol{R}, \boldsymbol{o}')$ to $\mathcal{D}_m$ and then let $\boldsymbol{o}=\boldsymbol{o}'$;
\ENDFOR
\ENDFOR
\FOR{$G$ gradient updates}
\STATE Update $\pi$ using data sampled from $\mathcal{D}_m$;
\ENDFOR
\ENDFOR
\end{algorithmic}
\end{algorithm}

\paragraph{The Overall Algorithm}
Algorithm~\ref{alg:mpc} gives the overall algorithm design of MAG.
In each outer loop, the current joint policy is applied in the real environment to collect an episode of real-world data, which is then added to the environment dataset $\mathcal{D}_e$ (Line 3).
Then, the local models are pre-trained by traditional one-step prediction loss $\sum_{i=1}^N\|\hat{o}^{i'}-o^{i'}\|+\|\hat{R}^i-R\|$, where $\hat{o}^{i'},\hat{R}^i\sim \hat{P}^i(\cdot,\cdot|\boldsymbol{o}, \boldsymbol{a})$ and each transition $(\boldsymbol{o}, \boldsymbol{a}, R, \boldsymbol{o}')$ is sampled from the environment dataset (Line 4).
Since the reward function $R$ is generally not available in practice, each local model is also trained to predict the global reward respectively given $(\boldsymbol{o}, \boldsymbol{a})$.
Besides, it deserves to be noted that we do not directly use the above pre-trained local models to obtain the predictions during model rollout, but instead optimize the multi-step predictions of local models via a planning process.
In Line 5, MAG trains the $\hat{R}_m$ network to approximate $R_m$, since by definition the model-reward $R_m$ involves the true environment dynamics and thus cannot be directly computed.
The approximation of $R_m$ will be detailed later on.
Lines 6-15 give the model rollout process where $M$ parallelized trajectories of length $k$ are generated based on different initial observations sampled from $\mathcal{D}_e$.
For each rollout step, before predicting the next observation, the local models first treat the current joint policy as the ``dynamics'' and then perform $H$-step ($H\leq k$) planning to obtain the best predictions for the current step (Lines 10-12).
This is the core of MAG and will be detailed later.
Finally, the pseudo samples generated by the model are added to the model dataset, which is then used for policy learning.
Specifically, we adopt PPO~\citep{schulman2017proximal} as the underlying policy optimization method and use global information that has been processed by the Attention block as the input of the critic.

\begin{figure*}[t]
\centering
\includegraphics[width=0.98\textwidth]{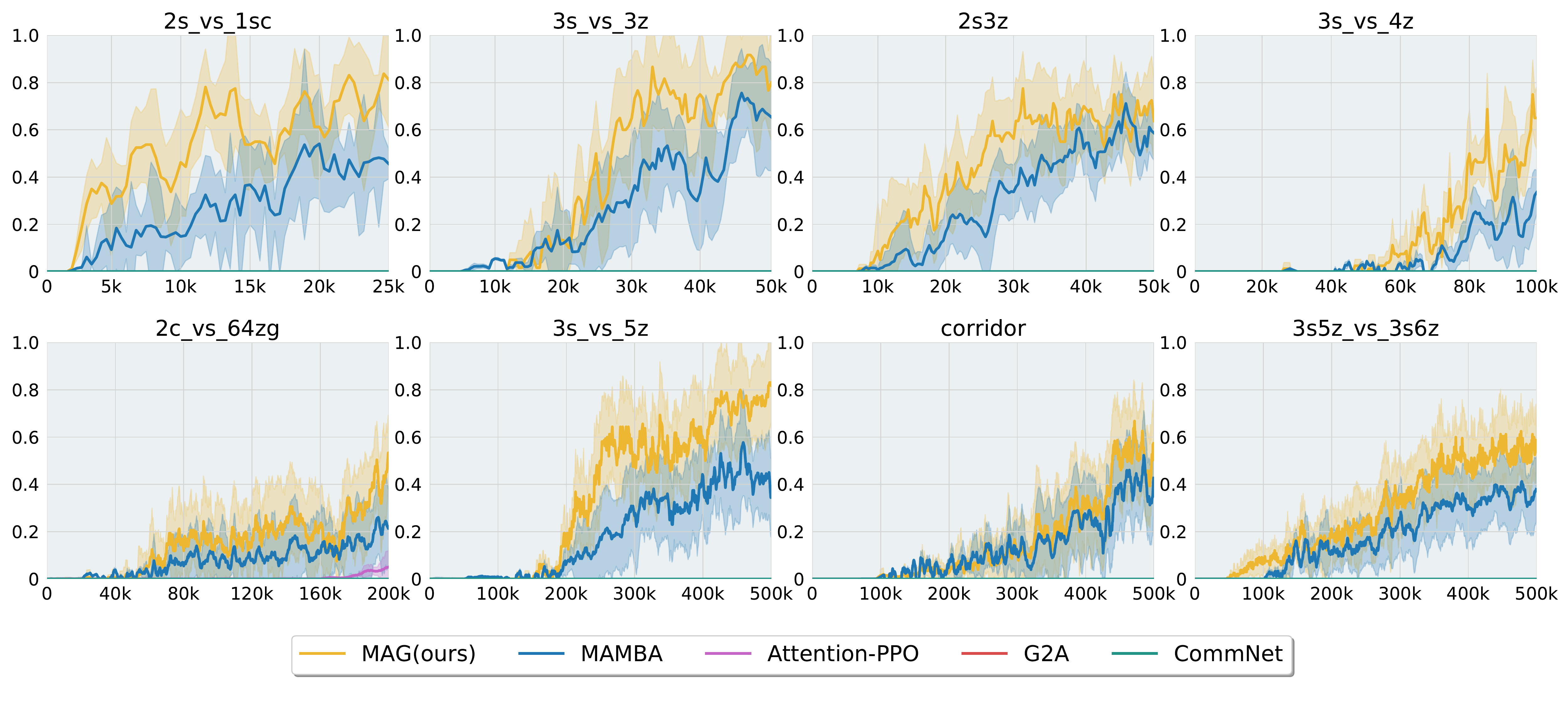} 
\caption{Comparisons against baselines on SMAC.
  Solid curves represent the mean of runs over 5 different random seeds, and shaded regions correspond to standard deviation among these runs. X axis denotes the number of steps taken in the real environment and Y axis denotes the win rate.}
\label{fig:overall_result}
\end{figure*}
\paragraph{Approximating $R_m$}
We approximate $R_m$ by training a neural network $\hat{R}_m$ which takes transitions $(\boldsymbol{o}, \boldsymbol{a}, R, \boldsymbol{o}')$ sampled from the environment dataset as inputs, and the model prediction errors on these transitions as labels. 
The prediction error of an environment transition is computed via $\sum_{i=1}^N\|\hat{o}^{i'}-o^{i'}\|+\|\hat{R}^i-R\|$, where $\hat{o}^{i'}$ and $\hat{R}^i$ are sampled from $\hat{P}^i(\cdot, \cdot|\boldsymbol{o}, \boldsymbol{a})$. 
Intuitively, $\hat{R}_m$ can be seen as an indicator that informs the models where their ``weaknesses'' lie in.
Additionally, since Dreamer V2 utilizes VAE~\citep{kingma2013auto} and learns the dynamics in a latent space, the actual loss of the dynamics consists of a reconstruction loss of the auto-encoder and a KL divergence loss that aim to minimize the distance between the prior and the posterior of the latent state.
Consequently, computing the model errors in Dreamer V2 can be interpreted as computing the prediction errors in the latent space, and thus is equivalent to computing $\sum_{i=1}^N\|\hat{o}^{i'}-o^{i'}\|+\|\hat{R}^i-R\|$ in principle.

\paragraph{Planning to Predict}
Since in our problem formulation the ``dynamics'' of the model rollout process (i.e., the current joint policy $\boldsymbol{\pi}$) is accessible, one of the simplest yet effective approaches to learn the models can be the Model Predictive Control (MPC)~\citep{camacho2013model}, which utilizes the dynamics to plan and optimize for a sequence of actions. 
Given the state $s_{m_t}$ at step $t$, the MPC controller first optimizes the sequence of actions $\boldsymbol{a}_{m_{t:t+H}}$ over a finite horizon $H$, and then employs the first action of the optimal action sequence, i.e., $\boldsymbol{a}_{m_{H, t}}:=\mathop{\arg\max}_{\boldsymbol{a}_{m_{t:t+H}}}\mathbb{E}_{\hat{P}, \boldsymbol{\pi}}\sum_{t'=t}^{t+H-1}R_m(s_{m_{t'}}, \boldsymbol{a}_{m_{t'}}) $.
Computing the exact $\mathop{\arg\max}_{\boldsymbol{a}_{m_{t:t+H}}}$ requires a complete search in a space of dimension $|A_m|^{N\cdot H}$, which is  impractical in most scenarios.
Thus, as specified from Lines 10-12 in Algorithm~\ref{alg:mpc}, we adopt the random-sampling shooting method~\citep{rao2009survey} which generates $L$ random action sequences, executes them respectively, and chooses the one with the highest return predicted by the dynamics.
Essentially, this planning process is a simulation of the interactions between the local models and the current joint policy, according to which each local model chooses the best prediction that approximately minimizes the global model error in concert with the other local predictions, thus achieving the coordination between local models.

\section{Experiments}
In this section, we present an empirical study of MAG on the challenging StarCraft II benchmark (SMAC)~\citep{samvelyan2019starcraft}.
In the first subsection, we provide the overall comparison between MAG and several baselines.
Then, we provide a quantitative analysis on the multi-step prediction loss to verify the effectiveness of our algorithm design in model learning.
In the last subsection, we conduct ablation studies to show how the choices of the planning horizon (i.e., $H$ in Algorithm~\ref{alg:mpc}) and the number of random shooting trajectories (i.e., $L$ in Algorithm~\ref{alg:mpc}) affect the overall performance.

\subsection{Comparative Evaluation}
\paragraph{Baselines} We compare MAG with a model-based baseline and several model-free baselines.
The model-based baseline is MAMBA~\citep{egorov2022scalable}, a recently proposed multi-agent MBRL method that achieves state-of-the-art sample efficiency in several SMAC tasks.
The model-free baselines include 1) Attention-PPO, the model-free counterpart of both MAG and MAMBA which equips PPO~\citep{schulman2017proximal} with centralized attention-critics and communication during execution; 2) G2A~\citep{liu2020multi}, which adopts a two-stage attention architecture to realize communication between agents; and 3) CommNet~\citep{sukhbaatar2016learning}, which applies LSTM~\citep{hochreiter1997long} to learn continuous communication protocols for partially observable environments.
In addition, it deserves to note that MAG is essentially a flexible plug-in component which can be employed by most model-based methods to improve the learning of the model.
In our comparisons, we plug the model learning process of MAG into MAMBA.

\paragraph{Environments} The methods are evaluated on 8 maps of SMAC, ranging from \textit{Easy} maps (2s\_vs\_1sc, 2s3z, 3s\_vs\_3z), \textit{Hard} maps (3s\_vs\_4z, 3s\_vs\_5z, 2c\_vs\_64zg) and \textit{Super Hard} maps (corridor, 3s5z\_vs\_3s6z).

\paragraph{Implementation Details} The implementation of MAG is overall built on MAMBA.~\footnote{Code available at https://github.com/ZifanWu/MAG.} 
For more details of the hyperparameter settings, please refer to Appendix~B.

\paragraph{Results} The overall results shown in Figure~\ref{fig:overall_result} demonstrate that MAG consistently outperforms all the baselines in low data regime.
The comparison between MAG and MAMBA verifies the effectiveness of optimizing multi-step prediction errors that are induced by the interactions between local models.
Besides, note that except for Attention-PPO in 2c\_vs\_64zg, all model-free baselines fail to even achieve a non-zero win rate in such low data regimes, showing the significant improvement of sample efficiency resulted from using a world model.

\subsection{Model Error Analysis}
Based on the theoretical result presented in Section~\ref{sec:theory}, the core idea of MAG is to reverse the roles played by the local models and the current joint policy, thus treating the models as decision-makers interacting with each other and aiming at minimizing the global accumulative model error. 
To validate the effectiveness of this algorithmic design, we empirically study the accumulative prediction error on the 2c\_vs\_64zg map.
While the real dynamics is unavailable during training, the error is approximated by a neural network trained on the environment dataset, i.e., $\hat{R}_m$.

\begin{figure}[h]
\centering
\includegraphics[width=0.94\columnwidth]{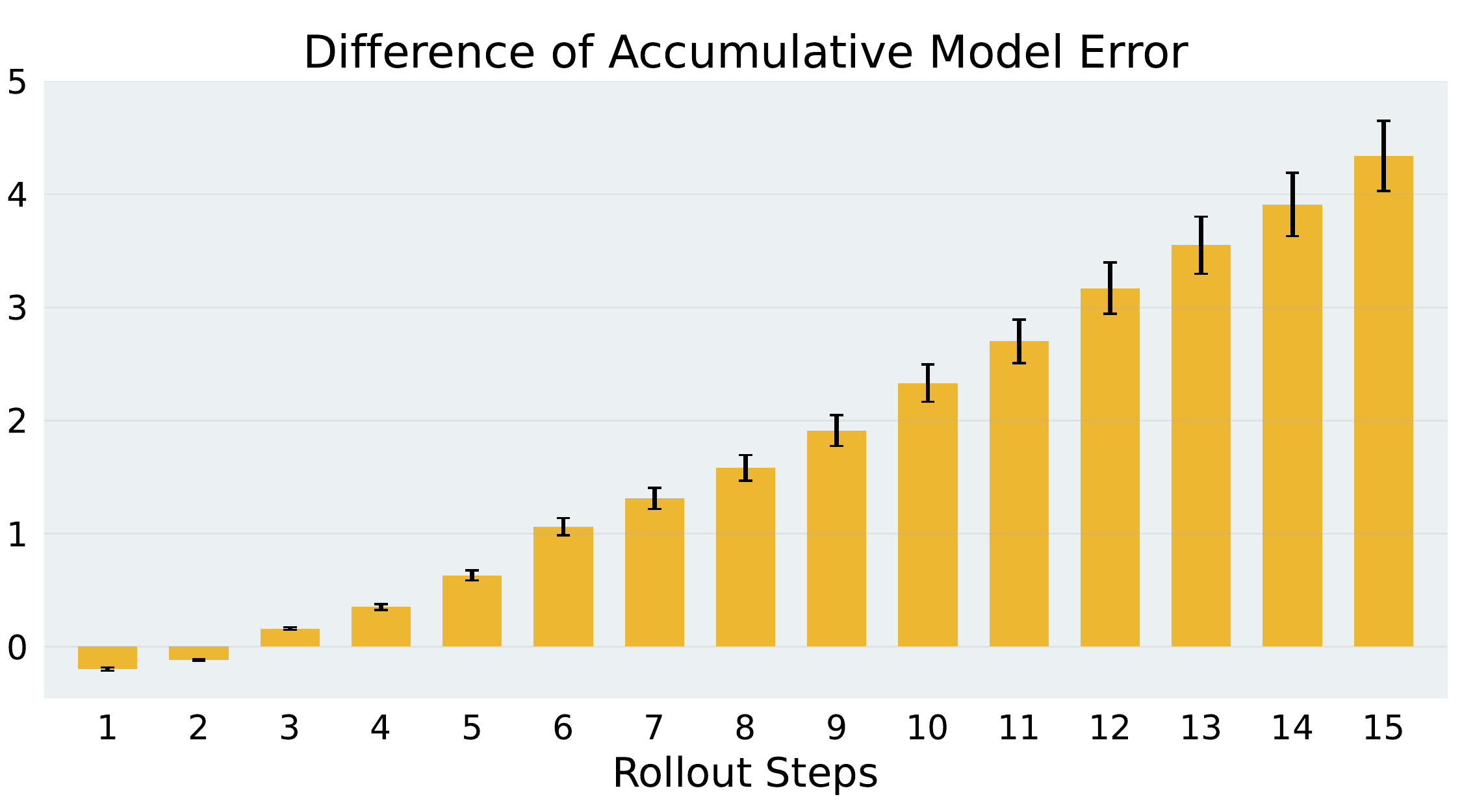} 
\caption{The difference of the accumulative model errors between MAMBA and MAG (the accumulative errors of MAMBA minus the accumulative errors of MAG), on the 2c\_vs\_64zg map.}
\label{fig:error_analysis}
\end{figure}

The result in Figure~\ref{fig:error_analysis} demonstrates that as the model rollout trajectories go longer, the accumulative model errors of MAMBA become significantly larger than that of MAG, which not only validates the effectiveness of MAG in reducing the accumulative model errors, but also provides a solid support for our theoretical result derived in Section~\ref{sec:method}, i.e., the method inducing less accumulative error is likely to achieve better performance.
Besides, we can also observe that in the first two steps the model errors induced by MAG are slightly larger than the errors of the baseline. 
This further agrees with the intuition mentioned in Section~\ref{sec:method} by showing that MAG is able to trade the one-step greedy model error for the accumulative error by considering the long-term effect of the immediate prediction.

\subsection{Ablation Studies}

According to the descriptions in Algorithm~\ref{alg:mpc}, apart from easy implementation, another advantage of utilizing MPC in optimizing the accumulative model-reward is that this only introduces a small number of extra hyperparameters.
Specifically, there are mainly two extra hyperparameters that need to be tuned, i.e., the planning horizon $H$ and the number of random shooting trajectories $L$.
The ablation results of these two hyperparameters are shown in Figure~\ref{fig:abla_H} and Figure~\ref{fig:abla_L} respectively, which indicate that longer planning horizons and more random trajectories always induce better performance.
Since increasing $H$ and $L$ leads to a rapid growth in terms of the computational complexity and the memory cost, the finally adopted settings of the two hyperparameters, which are detailed in Appendix~B, can be regarded as a compromise between this practical limit and the performance.

\begin{figure*}[t]
\centering
\includegraphics[width=1\textwidth]{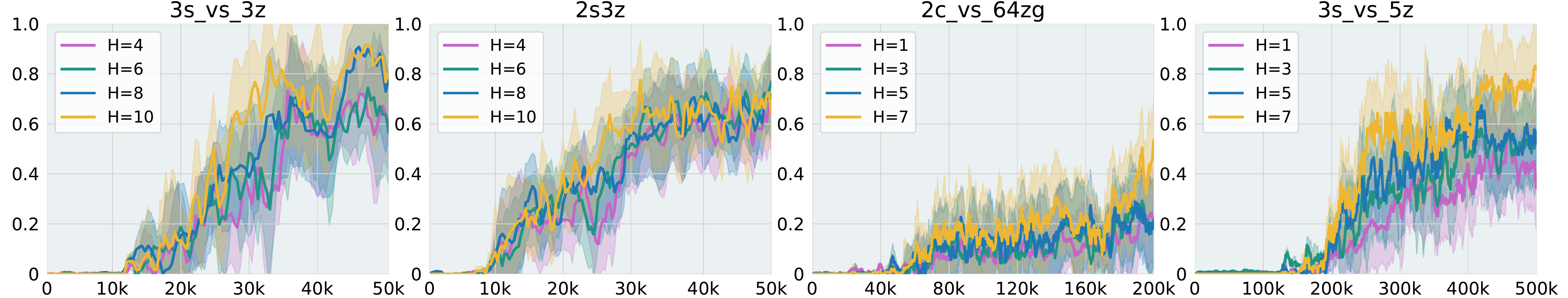} 
\caption{Ablation study of the planning horizon $H$.}
\label{fig:abla_H}
\end{figure*}
\begin{figure*}[t]
\centering
\includegraphics[width=1\textwidth]{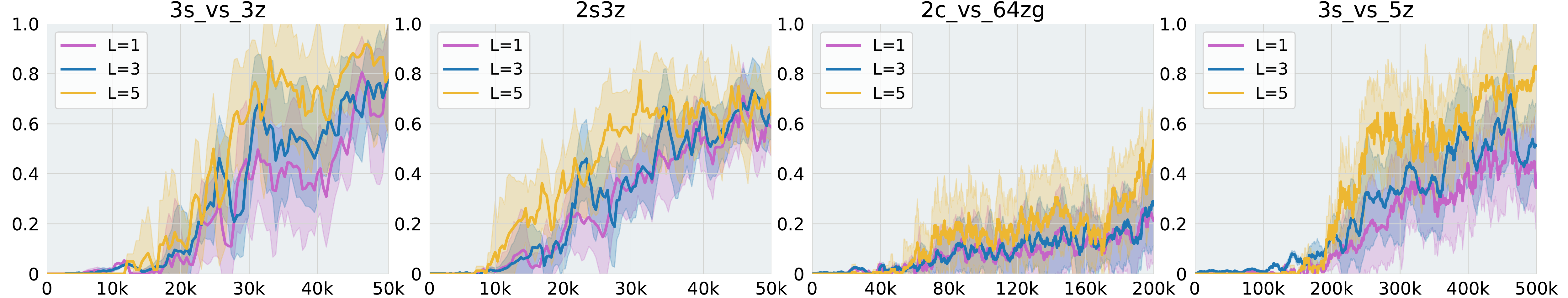} 
\caption{Ablation study of the the number of random shooting trajectories $L$.}
\label{fig:abla_L}
\end{figure*}

\section{Related Works}
The research of MBRL can be roughly divided into two lines: the model usage and the model learning. 
This work focuses on model learning and adopts the most common model usage, that is, generating pseudo samples to enrich the data buffer, so as to reduce the interaction with the environment and  accelerate policy learning~\citep{sutton1990integrated,sutton1991dyna,deisenroth2013survey,kalweit2017uncertainty,luo2018algorithmic,janner2019trust,pan2020trust}. 
Most of previous works in MBRL train the model simply by minimizing each one-step prediction error for transitions available in the environment dataset~\citep{kurutach2018model,chua2018deep,janner2019trust}.
However, in the multi-agent setting, the dimension of the joint observation-action space grows rapidly w.r.t. the number of agents, making it impractical to learn a global model for such complex environments~\citep{zhang2021model,wang2022model}.
Thus, a common approach is to train a local model for each agent which takes partial observations as input and predicts relevant information for the agent's policy~\citep{kim2020communication,zhang2021model}.
Therefore, MAMBA~\citep{egorov2022scalable} proposes to extract relevant information for each local model from the global information via the Attention mechanism~\citep{vaswani2017attention}, so as to avoid modelling the whole complicated dynamics and accelerate the model learning.
Although the Attention mechanism is effective in extracting information for different local models, the fuse of local information during multi-step model rollout may lead to the propagation of prediction errors between different local models, as discussed in the Section~\ref{sec:intro}.
To address this issue, we reformulate the model rollout process as the \textit{model MMDP} where the current joint policy is fixed to serve as the background environment and the local models are reversely regarded as decision-makers aiming at minimizing the global accumulative model error.

In the single-agent setting, some works have attempted to learn the model by treating the model rollout process as a sequential decision-making problem.
\citet{shang2019environment} propose an environment reconstruction method which models the influence of the hidden confounder on the environment by treating the platform, the user and the confounder as three agents interacting with each other. 
They focus on the offline setting (i.e., RL-based recommendation) and simultaneously train the model and the policy using a multi-agent imitation learning method.
\citet{xu2020error} treat the model as a dual agent and analyze the error bounds of the model. They propose to train the model using imitation learning methods. 
\citet{chen2022adversarial} also consider multi-step model errors, yet they mainly focus on handling counterfactual data queried by adversarial policies. 
Note that both~\citep{xu2020error} and~\citep{chen2022adversarial} focus solely on model learning in the single-agent setting and do not combine with the policy learning phase.

There are also works considering multi-step prediction loss in the single-agent setting~\citep{nagabandi2020deep,luo2018algorithmic}.
The essential difference between their multi-step loss and ours is that their loss is computed over the trajectories sampled from the environment dataset (collected by previous policies), while MAG minimizes the multi-step loss on the trajectories generated by active interactions between the local models as well as the current joint policy.
From the theoretical perspective, the model error term in Theorem 1 is defined by the expectation over the current joint policy and the current local models, thus computing the multi-step loss on the trajectories generated by these current policy and current models can better approximate the lower bound, which guarantees better policy improvement.

\section{Conclusion and Future Work}
In this work, we first study how the prediction errors of agent-wise local models affect the performance lower bound, which necessitates the considerations of the interactions between models during multi-step model rollout.
Based on this theoretical result, we reformulate the model rollout process as the \textit{model MMDP} by treating the local models as multi-step decision-makers and the current policies as the background environment.
We then propose a multi-agent model learning framework, i.e., MAG, to maximize the accumulative global ``model-reward'' defined in the \textit{model MMDP} by considering the interactions between local models.
We provide a practical implementation of MAG to optimize the above objective using the model predictive control.
Empirically, we show that MAG outperforms both model-based and model-free baselines on several challenging tasks in the StarCraft II benchmark, and the quantitative analysis of the model error further validates the effectiveness of our algorithmic design.
For the future work, we plan to study the problem of learning local models in the competitive or mixed cooperative-competitive scenarios, which can be seen as learning in a \textit{Markov Game}.

\section{Acknowledgments}
We gratefully acknowledge the support from the National Natural Science Foundation of China (No.62076259), the Fundamental and Applicational Research Funds of Guangdong province (No.2023A1515012946), and the Fundamental Research Funds for the Central Universities-Sun Yat-sen University.

\bigskip
\bibliography{aaai23}

\begin{thebibliography}{49}
\providecommand{\natexlab}[1]{#1}

\bibitem[{Amos et~al.(2021)Amos, Stanton, Yarats, and Wilson}]{amos2021model}
Amos, B.; Stanton, S.; Yarats, D.; and Wilson, A.~G. 2021.
\newblock On the model-based stochastic value gradient for continuous
  reinforcement learning.
\newblock In \emph{Learning for Dynamics and Control}, 6--20. PMLR.

\bibitem[{Bellemare et~al.(2013)Bellemare, Naddaf, Veness, and
  Bowling}]{bellemare2013arcade}
Bellemare, M.~G.; Naddaf, Y.; Veness, J.; and Bowling, M. 2013.
\newblock The arcade learning environment: An evaluation platform for general
  agents.
\newblock \emph{Journal of Artificial Intelligence Research}, 47: 253--279.

\bibitem[{Boutilier(1996)}]{boutilier1996planning}
Boutilier, C. 1996.
\newblock Planning, learning and coordination in multiagent decision processes.
\newblock In \emph{Proceedings of the 6th conference on Theoretical aspects of
  rationality and knowledge}, 195--210.

\bibitem[{Camacho and Alba(2013)}]{camacho2013model}
Camacho, E.~F.; and Alba, C.~B. 2013.
\newblock \emph{Model predictive control}.
\newblock Springer science \& business media.

\bibitem[{Chen et~al.(2022)Chen, Yu, Zhu, Yu, Chen, Wang, Wu, Wu, Qin, Ding
  et~al.}]{chen2022adversarial}
Chen, X.-H.; Yu, Y.; Zhu, Z.-M.; Yu, Z.; Chen, Z.; Wang, C.; Wu, Y.; Wu, H.;
  Qin, R.-J.; Ding, R.; et~al. 2022.
\newblock Adversarial Counterfactual Environment Model Learning.
\newblock \emph{arXiv preprint arXiv:2206.04890}.

\bibitem[{Chua et~al.(2018)Chua, Calandra, McAllister, and
  Levine}]{chua2018deep}
Chua, K.; Calandra, R.; McAllister, R.; and Levine, S. 2018.
\newblock Deep reinforcement learning in a handful of trials using
  probabilistic dynamics models.
\newblock \emph{Advances in neural information processing systems}, 31.

\bibitem[{Curi, Berkenkamp, and Krause(2020)}]{curi2020efficient}
Curi, S.; Berkenkamp, F.; and Krause, A. 2020.
\newblock Efficient model-based reinforcement learning through optimistic
  policy search and planning.
\newblock \emph{Advances in Neural Information Processing Systems}, 33:
  14156--14170.

\bibitem[{Deisenroth et~al.(2013)Deisenroth, Neumann, Peters
  et~al.}]{deisenroth2013survey}
Deisenroth, M.~P.; Neumann, G.; Peters, J.; et~al. 2013.
\newblock A survey on policy search for robotics.
\newblock \emph{Foundations and trends in Robotics}, 2(1-2): 388--403.

\bibitem[{D'Oro and Ja{\'s}kowski(2020)}]{d2020learn}
D'Oro, P.; and Ja{\'s}kowski, W. 2020.
\newblock How to learn a useful critic? Model-based action-gradient-estimator
  policy optimization.
\newblock \emph{Advances in Neural Information Processing Systems}, 33:
  313--324.

\bibitem[{Egorov and Shpilman(2022)}]{egorov2022scalable}
Egorov, V.; and Shpilman, A. 2022.
\newblock Scalable Multi-Agent Model-Based Reinforcement Learning.
\newblock In \emph{Proceedings of the 21st International Conference on
  Autonomous Agents and Multiagent Systems}, 381--390.

\bibitem[{Feinberg et~al.(2018)Feinberg, Wan, Stoica, Jordan, Gonzalez, and
  Levine}]{feinberg2018model}
Feinberg, V.; Wan, A.; Stoica, I.; Jordan, M.~I.; Gonzalez, J.~E.; and Levine,
  S. 2018.
\newblock Model-based value estimation for efficient model-free reinforcement
  learning.
\newblock \emph{arXiv preprint arXiv:1803.00101}.

\bibitem[{Foerster et~al.(2018)Foerster, Farquhar, Afouras, Nardelli, and
  Whiteson}]{foerster2018counterfactual}
Foerster, J.; Farquhar, G.; Afouras, T.; Nardelli, N.; and Whiteson, S. 2018.
\newblock Counterfactual multi-agent policy gradients.
\newblock In \emph{Proceedings of the AAAI conference on artificial
  intelligence}, volume~32.

\bibitem[{Fu et~al.(2022)Fu, Yu, Xu, Yang, and Wu}]{fu2022revisiting}
Fu, W.; Yu, C.; Xu, Z.; Yang, J.; and Wu, Y. 2022.
\newblock Revisiting Some Common Practices in Cooperative Multi-Agent
  Reinforcement Learning.
\newblock In \emph{International Conference on Machine Learning}, 6863--6877.
  PMLR.

\bibitem[{Gheshlaghi~Azar, Munos, and Kappen(2013)}]{gheshlaghi2013minimax}
Gheshlaghi~Azar, M.; Munos, R.; and Kappen, H.~J. 2013.
\newblock Minimax PAC bounds on the sample complexity of reinforcement learning
  with a generative model.
\newblock \emph{Machine learning}, 91(3): 325--349.

\bibitem[{Hafner et~al.(2019)Hafner, Lillicrap, Fischer, Villegas, Ha, Lee, and
  Davidson}]{hafner2019learning}
Hafner, D.; Lillicrap, T.; Fischer, I.; Villegas, R.; Ha, D.; Lee, H.; and
  Davidson, J. 2019.
\newblock Learning latent dynamics for planning from pixels.
\newblock In \emph{International conference on machine learning}, 2555--2565.
  PMLR.

\bibitem[{Hafner et~al.(2020)Hafner, Lillicrap, Norouzi, and
  Ba}]{hafner2020mastering}
Hafner, D.; Lillicrap, T.~P.; Norouzi, M.; and Ba, J. 2020.
\newblock Mastering Atari with Discrete World Models.
\newblock In \emph{International Conference on Learning Representations}.

\bibitem[{Hochreiter and Schmidhuber(1997)}]{hochreiter1997long}
Hochreiter, S.; and Schmidhuber, J. 1997.
\newblock Long short-term memory.
\newblock \emph{Neural computation}, 9(8): 1735--1780.

\bibitem[{Janner et~al.(2019)Janner, Fu, Zhang, and Levine}]{janner2019trust}
Janner, M.; Fu, J.; Zhang, M.; and Levine, S. 2019.
\newblock When to trust your model: Model-based policy optimization.
\newblock \emph{Advances in Neural Information Processing Systems}, 32.

\bibitem[{Kalweit and Boedecker(2017)}]{kalweit2017uncertainty}
Kalweit, G.; and Boedecker, J. 2017.
\newblock Uncertainty-driven imagination for continuous deep reinforcement
  learning.
\newblock In \emph{Conference on Robot Learning}, 195--206. PMLR.

\bibitem[{Kim, Park, and Sung(2020)}]{kim2020communication}
Kim, W.; Park, J.; and Sung, Y. 2020.
\newblock Communication in multi-agent reinforcement learning: Intention
  sharing.
\newblock In \emph{International Conference on Learning Representations}.

\bibitem[{Kingma and Welling(2013)}]{kingma2013auto}
Kingma, D.~P.; and Welling, M. 2013.
\newblock Auto-encoding variational bayes.
\newblock \emph{arXiv preprint arXiv:1312.6114}.

\bibitem[{Kurutach et~al.(2018)Kurutach, Clavera, Duan, Tamar, and
  Abbeel}]{kurutach2018model}
Kurutach, T.; Clavera, I.; Duan, Y.; Tamar, A.; and Abbeel, P. 2018.
\newblock Model-ensemble trust-region policy optimization.
\newblock \emph{arXiv preprint arXiv:1802.10592}.

\bibitem[{Liu et~al.(2020)Liu, Wang, Hu, Hao, Chen, and Gao}]{liu2020multi}
Liu, Y.; Wang, W.; Hu, Y.; Hao, J.; Chen, X.; and Gao, Y. 2020.
\newblock Multi-agent game abstraction via graph attention neural network.
\newblock In \emph{Proceedings of the AAAI Conference on Artificial
  Intelligence}, volume~34, 7211--7218.

\bibitem[{Lowe et~al.(2017)Lowe, Wu, Tamar, Harb, Pieter~Abbeel, and
  Mordatch}]{lowe2017multi}
Lowe, R.; Wu, Y.~I.; Tamar, A.; Harb, J.; Pieter~Abbeel, O.; and Mordatch, I.
  2017.
\newblock Multi-agent actor-critic for mixed cooperative-competitive
  environments.
\newblock \emph{Advances in neural information processing systems}, 30.

\bibitem[{Luo et~al.(2022)Luo, Xu, Lai, Chen, Zhang, and Yu}]{luo2022survey}
Luo, F.-M.; Xu, T.; Lai, H.; Chen, X.-H.; Zhang, W.; and Yu, Y. 2022.
\newblock A Survey on Model-based Reinforcement Learning.
\newblock \emph{arXiv preprint arXiv:2206.09328}.

\bibitem[{Luo et~al.(2019)Luo, Xu, Li, Tian, Darrell, and
  Ma}]{luo2018algorithmic}
Luo, Y.; Xu, H.; Li, Y.; Tian, Y.; Darrell, T.; and Ma, T. 2019.
\newblock Algorithmic Framework for Model-based Deep Reinforcement Learning
  with Theoretical Guarantees.
\newblock In \emph{ICLR (Poster)}.

\bibitem[{Moerland, Broekens, and Jonker(2020)}]{moerland2020model}
Moerland, T.~M.; Broekens, J.; and Jonker, C.~M. 2020.
\newblock Model-based reinforcement learning: A survey.
\newblock \emph{arXiv preprint arXiv:2006.16712}.

\bibitem[{Nagabandi et~al.(2020)Nagabandi, Konolige, Levine, and
  Kumar}]{nagabandi2020deep}
Nagabandi, A.; Konolige, K.; Levine, S.; and Kumar, V. 2020.
\newblock Deep dynamics models for learning dexterous manipulation.
\newblock In \emph{Conference on Robot Learning}, 1101--1112. PMLR.

\bibitem[{Oliehoek and Amato(2016)}]{oliehoek2016concise}
Oliehoek, F.~A.; and Amato, C. 2016.
\newblock \emph{A concise introduction to decentralized POMDPs}.
\newblock Springer.

\bibitem[{Pan et~al.(2020)Pan, He, Tu, and He}]{pan2020trust}
Pan, F.; He, J.; Tu, D.; and He, Q. 2020.
\newblock Trust the model when it is confident: Masked model-based
  actor-critic.
\newblock \emph{Advances in neural information processing systems}, 33:
  10537--10546.

\bibitem[{Rao(2009)}]{rao2009survey}
Rao, A.~V. 2009.
\newblock A survey of numerical methods for optimal control.
\newblock \emph{Advances in the Astronautical Sciences}, 135(1): 497--528.

\bibitem[{Rashid et~al.(2018)Rashid, Samvelyan, Schroeder, Farquhar, Foerster,
  and Whiteson}]{rashid2018qmix}
Rashid, T.; Samvelyan, M.; Schroeder, C.; Farquhar, G.; Foerster, J.; and
  Whiteson, S. 2018.
\newblock Qmix: Monotonic value function factorisation for deep multi-agent
  reinforcement learning.
\newblock In \emph{International conference on machine learning}, 4295--4304.
  PMLR.

\bibitem[{Samvelyan et~al.(2019)Samvelyan, Rashid, de~Witt, Farquhar, Nardelli,
  Rudner, Hung, Torr, Foerster, and Whiteson}]{samvelyan2019starcraft}
Samvelyan, M.; Rashid, T.; de~Witt, C.~S.; Farquhar, G.; Nardelli, N.; Rudner,
  T.~G.; Hung, C.-M.; Torr, P.~H.; Foerster, J.~N.; and Whiteson, S. 2019.
\newblock The StarCraft Multi-Agent Challenge.
\newblock In \emph{AAMAS}.

\bibitem[{Schulman et~al.(2015)Schulman, Levine, Abbeel, Jordan, and
  Moritz}]{schulman2015trust}
Schulman, J.; Levine, S.; Abbeel, P.; Jordan, M.; and Moritz, P. 2015.
\newblock Trust region policy optimization.
\newblock In \emph{International conference on machine learning}, 1889--1897.
  PMLR.

\bibitem[{Schulman et~al.(2017)Schulman, Wolski, Dhariwal, Radford, and
  Klimov}]{schulman2017proximal}
Schulman, J.; Wolski, F.; Dhariwal, P.; Radford, A.; and Klimov, O. 2017.
\newblock Proximal policy optimization algorithms.
\newblock \emph{arXiv preprint arXiv:1707.06347}.

\bibitem[{Shang et~al.(2019)Shang, Yu, Li, Qin, Meng, and
  Ye}]{shang2019environment}
Shang, W.; Yu, Y.; Li, Q.; Qin, Z.; Meng, Y.; and Ye, J. 2019.
\newblock Environment reconstruction with hidden confounders for reinforcement
  learning based recommendation.
\newblock In \emph{Proceedings of the 25th ACM SIGKDD International Conference
  on Knowledge Discovery \& Data Mining}, 566--576.

\bibitem[{Shapley(1953)}]{shapley1953stochastic}
Shapley, L.~S. 1953.
\newblock Stochastic games.
\newblock \emph{Proceedings of the national academy of sciences}, 39(10):
  1095--1100.

\bibitem[{Song and Sun(2021)}]{song2021pc}
Song, Y.; and Sun, W. 2021.
\newblock Pc-mlp: Model-based reinforcement learning with policy cover guided
  exploration.
\newblock In \emph{International Conference on Machine Learning}, 9801--9811.
  PMLR.

\bibitem[{Sukhbaatar, Fergus et~al.(2016)}]{sukhbaatar2016learning}
Sukhbaatar, S.; Fergus, R.; et~al. 2016.
\newblock Learning multiagent communication with backpropagation.
\newblock \emph{Advances in neural information processing systems}, 29.

\bibitem[{Sutton(1990)}]{sutton1990integrated}
Sutton, R.~S. 1990.
\newblock Integrated architectures for learning, planning, and reacting based
  on approximating dynamic programming.
\newblock In \emph{Machine learning proceedings 1990}, 216--224. Elsevier.

\bibitem[{Sutton(1991)}]{sutton1991dyna}
Sutton, R.~S. 1991.
\newblock Dyna, an integrated architecture for learning, planning, and
  reacting.
\newblock \emph{ACM Sigart Bulletin}, 2(4): 160--163.

\bibitem[{Todorov, Erez, and Tassa(2012)}]{todorov2012mujoco}
Todorov, E.; Erez, T.; and Tassa, Y. 2012.
\newblock Mujoco: A physics engine for model-based control.
\newblock In \emph{2012 IEEE/RSJ international conference on intelligent robots
  and systems}, 5026--5033. IEEE.

\bibitem[{Vaswani et~al.(2017)Vaswani, Shazeer, Parmar, Uszkoreit, Jones,
  Gomez, Kaiser, and Polosukhin}]{vaswani2017attention}
Vaswani, A.; Shazeer, N.; Parmar, N.; Uszkoreit, J.; Jones, L.; Gomez, A.~N.;
  Kaiser, {\L}.; and Polosukhin, I. 2017.
\newblock Attention is all you need.
\newblock \emph{Advances in neural information processing systems}, 30.

\bibitem[{Wang et~al.(2021)Wang, Ren, Liu, Yu, and Zhang}]{wang2021qplex}
Wang, J.; Ren, Z.; Liu, T.; Yu, Y.; and Zhang, C. 2021.
\newblock QPLEX: Duplex Dueling Multi-Agent Q-Learning.
\newblock In \emph{International Conference on Learning Representations}.

\bibitem[{Wang et~al.(2020)Wang, Yang, Liu, Hao, Hao, Hu, Chen, Fan, and
  Gao}]{wang2020few}
Wang, W.; Yang, T.; Liu, Y.; Hao, J.; Hao, X.; Hu, Y.; Chen, Y.; Fan, C.; and
  Gao, Y. 2020.
\newblock From few to more: Large-scale dynamic multiagent curriculum learning.
\newblock In \emph{Proceedings of the AAAI Conference on Artificial
  Intelligence}, volume~34, 7293--7300.

\bibitem[{Wang, Zhang, and Zhang(2022)}]{wang2022model}
Wang, X.; Zhang, Z.; and Zhang, W. 2022.
\newblock Model-based Multi-agent Reinforcement Learning: Recent Progress and
  Prospects.
\newblock \emph{arXiv preprint arXiv:2203.10603}.

\bibitem[{Wu et~al.(2021)Wu, Yu, Ye, Zhang, Zhuo et~al.}]{wu2021coordinated}
Wu, Z.; Yu, C.; Ye, D.; Zhang, J.; Zhuo, H.~H.; et~al. 2021.
\newblock Coordinated proximal policy optimization.
\newblock \emph{Advances in Neural Information Processing Systems}, 34:
  26437--26448.

\bibitem[{Xu, Li, and Yu(2020)}]{xu2020error}
Xu, T.; Li, Z.; and Yu, Y. 2020.
\newblock Error bounds of imitating policies and environments.
\newblock \emph{Advances in Neural Information Processing Systems}, 33:
  15737--15749.

\bibitem[{Zhang et~al.(2021)Zhang, Wang, Shen, and Zhou}]{zhang2021model}
Zhang, W.; Wang, X.; Shen, J.; and Zhou, M. 2021.
\newblock Model-based multi-agent policy optimization with adaptive
  opponent-wise rollouts.
\newblock \emph{arXiv preprint arXiv:2105.03363}.

\end{thebibliography}

\clearpage
\noindent{\huge \textbf{Appendices}}
\bigskip
\appendix

\section{Mathematical Proofs}
\subsection{Proof of Theorem 1}
\begin{theorem}
\label{thm:upperbound}
The gap between the expected return of the model and the environment can be bounded as:
    \begin{align}
    \label{eq:upperbound}
\left| J(\boldsymbol{\pi})-J^{\hat{P}}(\boldsymbol{\pi})\right| \leq
\frac{R_{max}}{(1-\gamma)^2}\left(2\epsilon_{\boldsymbol{\pi}}+
(1-\gamma)\sum_{t=1}^\infty\gamma^t\epsilon_{m_t}\right),
\end{align}
where $\epsilon_{\boldsymbol{\pi}}:=\max_{\boldsymbol{o}} D_{TV}(\boldsymbol{\pi}_D(\cdot| \boldsymbol{o})\|\boldsymbol{\pi}(\cdot| \boldsymbol{o}))$ denotes the distribution shift of the joint policy between two consecutive iterations,
$\epsilon_{m_t}:=\mathbb{E}_{\boldsymbol{o}\sim \hat{P}_{t-1}(\boldsymbol{o};\boldsymbol{\pi})} \left[\max_{\boldsymbol{a}}\sqrt{2\sum_{i=1}^N\mathbb{E}_{\boldsymbol{o}'\sim \hat{P}(\cdot|\boldsymbol{o}, \boldsymbol{a})}\left[ \log\frac{\hat{P}^i(o^{i'}|\boldsymbol{o}, \boldsymbol{a})}{\sqrt[N]{P(\boldsymbol{o}'|\boldsymbol{o}, \boldsymbol{a})}}   \right]}\right] $ denotes the upper bound of the $i$-th model's error at timestep $t$ of the model rollout trajectory, 
$\hat{P}_{t-1}(\boldsymbol{o};\boldsymbol{\pi})$ denotes the distribution of joint observation at $t-1$ under $\hat{P}$ and $\boldsymbol{\pi}$, 
and $R_{max}:=\max_{s, \boldsymbol{a}} R(s, \boldsymbol{a})$.
\end{theorem}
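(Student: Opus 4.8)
The plan is to follow the standard telescoping ``simulation-lemma'' template for model-based return bounds \citep{luo2018algorithmic,janner2019trust}, but with two deliberate departures: keeping the per-timestep model errors \emph{inside} the discounted sum rather than bounding them by their supremum over $t$, and inserting a multi-agent decomposition that converts the one-step joint-transition error into the per-agent log-ratio terms appearing in $\epsilon_{m_t}$. First I would write both returns as discounted sums of rewards against the per-timestep joint state-action distributions, $J(\boldsymbol{\pi})=\sum_{t}\gamma^t\mathbb{E}_{\rho_t^{P}}[R]$ and $J^{\hat{P}}(\boldsymbol{\pi})=\sum_{t}\gamma^t\mathbb{E}_{\rho_t^{\hat{P}}}[R]$, where $\rho_t^{P}$ and $\rho_t^{\hat{P}}$ are induced by the same policy $\boldsymbol{\pi}$ under the true dynamics $P$ and the model $\hat{P}$ respectively. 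Since $|\mathbb{E}_p[R]-\mathbb{E}_q[R]|\le 2R_{max}D_{TV}(p\|q)$, the gap is controlled by $2R_{max}\sum_t\gamma^t D_{TV}(\rho_t^{P}\|\rho_t^{\hat{P}})$.

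The central step is a recursion bounding $D_{TV}(\rho_t^{P}\|\rho_t^{\hat{P}})$: unrolling one transition at a time, the divergence at step $t$ is bounded by the divergence at step $t-1$ plus the expected one-step transition error incurred at step $t$, with the $t=0$ term vanishing because the two chains share the initial distribution. Crucially I would \emph{not} collapse these one-step errors to a single worst-case value; carrying them individually and then exchanging the order of summation in $\sum_t\gamma^t\sum_{k\le t}(\cdot)$ yields the geometric weight $\frac{\gamma^k}{1-\gamma}$ and hence the $(1-\gamma)\sum_t\gamma^t\epsilon_{m_t}$ term together with the $\frac{R_{max}}{(1-\gamma)^2}$ prefactor. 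The additive $2\epsilon_{\boldsymbol{\pi}}$ term enters through the same branched-rollout accounting: the model's accuracy is guaranteed relative to the data-collecting policy $\boldsymbol{\pi}_D$, so bridging from $\boldsymbol{\pi}_D$ to the evaluation policy $\boldsymbol{\pi}$ along the rollout contributes the policy-shift penalty $\max_{\boldsymbol{o}}D_{TV}(\boldsymbol{\pi}_D(\cdot|\boldsymbol{o})\|\boldsymbol{\pi}(\cdot|\boldsymbol{o}))$.

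The genuinely multi-agent step, which I expect to be the main obstacle, is rewriting the one-step joint-transition error in the per-agent form inside $\epsilon_{m_t}$. Here I would invoke Pinsker's inequality, $D_{TV}(\hat{P}\|P)\le\sqrt{\tfrac12 D_{KL}(\hat{P}\|P)}$, and then exploit the product factorization $\hat{P}(\boldsymbol{o}'|\boldsymbol{o},\boldsymbol{a})=\prod_{i=1}^N\hat{P}^i(o^{i'}|\boldsymbol{o},\boldsymbol{a})$. Writing the true-dynamics normalizer as $\log P=\sum_i\log\sqrt[N]{P}$ lets me split $D_{KL}(\hat{P}\|P)=\sum_i\mathbb{E}_{\boldsymbol{o}'\sim\hat{P}}[\log\frac{\hat{P}^i(o^{i'}|\boldsymbol{o},\boldsymbol{a})}{\sqrt[N]{P(\boldsymbol{o}'|\boldsymbol{o},\boldsymbol{a})}}]$, which is exactly the quantity under the square root in $\epsilon_{m_t}$; taking the expectation over $\boldsymbol{o}\sim\hat{P}_{t-1}(\cdot;\boldsymbol{\pi})$ and the worst-case $\max_{\boldsymbol{a}}$ then recovers the stated form.

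The delicate points in this final step are keeping the constants consistent through Pinsker (reconciling the $\tfrac12$ from the inequality with the factor $2$ appearing inside the square root of $\epsilon_{m_t}$, which may reflect a particular total-variation convention or a deliberately loosened constant), getting the $\sqrt[N]{P}$ normalization to emerge cleanly from the product structure, and ensuring that the recursion's per-step error is evaluated under $\hat{P}_{t-1}(\cdot;\boldsymbol{\pi})$ --- the model's \emph{own} rollout distribution --- so that the bound is expressed entirely in terms of quantities MAG can actually optimize. Collecting the geometric sums and constants then gives Eq.~(\ref{eq:upperbound}). Full details are deferred to Appendix~A.
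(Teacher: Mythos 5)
Your proposal follows essentially the same route as the paper's proof: the MBPO-style simulation lemma with the per-step model errors kept inside the discounted sum (rather than collapsed to $\max_t\epsilon_{m_t}$), followed by Pinsker's inequality and the product factorization $\hat{P}(\boldsymbol{o}'|\boldsymbol{o},\boldsymbol{a})=\prod_{i=1}^N\hat{P}^i(o^{i'}|\boldsymbol{o},\boldsymbol{a})$ with the $\sqrt[N]{P}$ normalizer to obtain the per-agent log-ratio form of $\epsilon_{m_t}$. The one point you leave vague---the precise origin of the factor $2$ in $2\epsilon_{\boldsymbol{\pi}}$---is made concrete in the paper by a triangle inequality through the data-collecting policy's return, $\left|J(\boldsymbol{\pi})-J^{\hat{P}}(\boldsymbol{\pi})\right|\leq\left|J(\boldsymbol{\pi})-J(\boldsymbol{\pi}_D)\right|+\left|J(\boldsymbol{\pi}_D)-J^{\hat{P}}(\boldsymbol{\pi})\right|$, with a single general lemma applied to each term contributing one $\epsilon_{\boldsymbol{\pi}}$ apiece; your constant-tracking concerns through Pinsker are resolved exactly as you suspect (the $\sqrt{2\sum_i(\cdot)}$ in the definition of $\epsilon_{m_t}$ absorbs the Pinsker $\tfrac{1}{2}$ via a $\tfrac{1}{2}$ coefficient in the intermediate bound).
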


\begin{proof}
Note that in Section 2 of the main paper, we have assumed that $(\boldsymbol{o}, \boldsymbol{a})$ is sufficient to obtain $R$ without loss of generality, thus the reward function $R$ is written as $R(\boldsymbol{o}, \boldsymbol{a})$ in this proof.
To derive the relation of performance gap w.r.t. the policy shift during one training iteration, we introduce the data-collecting policy by adding and subtracting $J(\boldsymbol{\pi}_D)$, then we have:
\begin{align}
\label{eq:JMJP}
    \left| J(\boldsymbol{\pi})-J^{\hat{P}}(\boldsymbol{\pi})\right| \leq \left| J(\boldsymbol{\pi})-J(\boldsymbol{\pi}_D)\right| + \left| J(\boldsymbol{\pi}_D)-J^{\hat{P}}(\boldsymbol{\pi})\right|.
\end{align}
Applying Lemma~\ref{lemma:J1MJ2} on the two terms of the right hand side of Eq.~(\ref{eq:JMJP}), we have:
\begin{align}
    \left| J(\boldsymbol{\pi})-J^{\hat{P}}(\boldsymbol{\pi})\right|\leq 
    \frac{R_{max}}{(1-\gamma)^2}\left(2\epsilon_{\boldsymbol{\pi}}+
(1-\gamma)\sum_{t=1}^\infty\gamma^t\epsilon_{m_t}\right).
\end{align}

\end{proof}
\begin{lemma}
\label{lemma:J1MJ2}
Denoting $ J_1(\boldsymbol{\pi}_1) $ as the return of $\boldsymbol{\pi}_1$ under $p_1(\boldsymbol{o}'| \boldsymbol{o}, \boldsymbol{a})$ where $p_1(\boldsymbol{o}'| \boldsymbol{o}, \boldsymbol{a})=\prod_{i=1}^N p_1^i(o^{i'}| \boldsymbol{o}, \boldsymbol{a})$, $ J_2(\boldsymbol{\pi}_2) $ as the return of $\boldsymbol{\pi}_2$ under $p_2(\boldsymbol{o}'| \boldsymbol{o}, \boldsymbol{a})$, $\epsilon_{\boldsymbol{\pi}}:=\max_{\boldsymbol{o}} D_{TV}(\boldsymbol{\pi}_1(\cdot| \boldsymbol{o})\|\boldsymbol{\pi}_2(\cdot| \boldsymbol{o}))$, 
$\epsilon_{m_t}:=\mathbb{E}_{\boldsymbol{o}\sim p_{1, t-1}(\boldsymbol{o};\boldsymbol{\pi})} \left[\max_{\boldsymbol{a}}\sqrt{2\sum_{i=1}^N\mathbb{E}_{\boldsymbol{o}'\sim p_1(\cdot|\boldsymbol{o}, \boldsymbol{a})}\left[ \log\frac{p_{1}^i(o^{i'}|\boldsymbol{o}, \boldsymbol{a})}{\sqrt[N]{p_2(\boldsymbol{o}'|\boldsymbol{o}, \boldsymbol{a})}}   \right]}\right] $, 
and $p_{1,t-1}(\boldsymbol{o};\boldsymbol{\pi})$ as the distribution of joint observation at $t-1$ under $p_1$ and $\boldsymbol{\pi}$, we have:
\begin{align}
    \left| J_1(\boldsymbol{\pi}_1)-J_2(\boldsymbol{\pi}_2)\right|\leq \frac{R_{max}}{(1-\gamma)^2}\left(\epsilon_{\boldsymbol{\pi}}+
(1-\gamma)\sum_{t=1}^\infty\gamma^t\epsilon_{m_t}\right).
\end{align}
\end{lemma}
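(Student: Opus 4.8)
The plan is to express each return as a discounted sum of expected rewards over the per-step joint observation--action occupancy measures, and then bound the return gap by the total-variation distance between these occupancy measures, accumulated step by step. Writing $p_1^t$ and $p_2^t$ for the joint observation--action distributions at step $t$ induced by $(\boldsymbol{\pi}_1, p_1)$ and $(\boldsymbol{\pi}_2, p_2)$ respectively, I would first note
$$\left|J_1(\boldsymbol{\pi}_1)-J_2(\boldsymbol{\pi}_2)\right| = \left|\sum_{t=0}^\infty\gamma^t\left(\mathbb{E}_{p_1^t}[R]-\mathbb{E}_{p_2^t}[R]\right)\right| \le R_{max}\sum_{t=0}^\infty\gamma^t D_{TV}(p_1^t\|p_2^t),$$
using only that $R$ is bounded by $R_{max}$ together with the variational characterization of $D_{TV}$.

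The core step is a telescoping bound on $D_{TV}(p_1^t\|p_2^t)$. Letting $\mu_i^t$ denote the joint observation marginal at step $t$ and $K_i$ the one-step kernel composing $\boldsymbol{\pi}_i$ with $p_i$, I would write $\mu_1^t-\mu_2^t = K_2(\mu_1^{t-1}-\mu_2^{t-1})+(K_1-K_2)\mu_1^{t-1}$, invoke the non-expansiveness of the stochastic kernel $K_2$ in $D_{TV}$, and unroll to obtain $D_{TV}(\mu_1^t\|\mu_2^t)\le\sum_{k=0}^{t-1}\mathbb{E}_{\boldsymbol{o}\sim\mu_1^k}\big[D_{TV}(K_1(\cdot|\boldsymbol{o})\|K_2(\cdot|\boldsymbol{o}))\big]$, where the expectation is taken under the $p_1$-process exactly as required by the definition of $\epsilon_{m_t}$. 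Splitting each one-step term through the intermediate kernel $\boldsymbol{\pi}_1 p_2$ separates a policy contribution, bounded by $\epsilon_{\boldsymbol{\pi}}$, from a model contribution $\mathbb{E}_{\boldsymbol{o}\sim\mu_1^k}[\max_{\boldsymbol{a}}D_{TV}(p_1(\cdot|\boldsymbol{o},\boldsymbol{a})\|p_2(\cdot|\boldsymbol{o},\boldsymbol{a}))]$; a final split of $p_1^t$ into $\mu_1^t\boldsymbol{\pi}_1$ adds one more $\epsilon_{\boldsymbol{\pi}}$.

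To convert the model contribution into the stated form I would apply Pinsker's inequality, $D_{TV}(p_1\|p_2)\le\sqrt{\tfrac12 D_{KL}(p_1\|p_2)}$, and then exploit the factorization $p_1(\boldsymbol{o}'|\boldsymbol{o},\boldsymbol{a})=\prod_{i=1}^N p_1^i(o^{i'}|\boldsymbol{o},\boldsymbol{a})$ together with $\log p_2=\sum_{i=1}^N\log\sqrt[N]{p_2}$ to rewrite
$$D_{KL}(p_1\|p_2)=\sum_{i=1}^N\mathbb{E}_{\boldsymbol{o}'\sim p_1}\left[\log\frac{p_1^i(o^{i'}|\boldsymbol{o},\boldsymbol{a})}{\sqrt[N]{p_2(\boldsymbol{o}'|\boldsymbol{o},\boldsymbol{a})}}\right],$$
which is precisely the quantity inside $\epsilon_{m_t}$. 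This identifies the step-$k$ model contribution with $\tfrac12\epsilon_{m_{k+1}}\le\epsilon_{m_{k+1}}$, the factor from Pinsker only loosening the stated constant.

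Finally I would assemble the geometric sums. The policy terms contribute $R_{max}\epsilon_{\boldsymbol{\pi}}\sum_{t\ge0}(t+1)\gamma^t = \frac{R_{max}}{(1-\gamma)^2}\epsilon_{\boldsymbol{\pi}}$, and the model terms, after the summation-order exchange $\sum_{t}\gamma^t\sum_{s\le t}\epsilon_{m_s}=\frac{1}{1-\gamma}\sum_s\gamma^s\epsilon_{m_s}$, contribute at most $\frac{R_{max}}{1-\gamma}\sum_{t\ge1}\gamma^t\epsilon_{m_t}$, giving the claimed bound. The step I expect to be the main obstacle is resisting the classical shortcut of bounding $\sum_t\gamma^t\epsilon_{m_t}$ by $\frac{\gamma}{1-\gamma}\max_t\epsilon_{m_t}$: keeping the per-step errors un-maximized is exactly the refinement emphasized in the paper, so the summation-order exchange must preserve the $\gamma^t$-weighted per-step structure, and the occupancy expectations must be kept under the $p_1$-process throughout the recursion so that they match the definition of $\epsilon_{m_t}$.
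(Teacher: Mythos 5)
Your proof is correct and follows essentially the same route as the paper's: bound the return gap by $\gamma^t$-weighted total-variation distances between per-step joint observation--action occupancies, telescope the observation-marginal TV into per-step kernel discrepancies evaluated under the $p_1$-process (the paper imports exactly this as MBPO's Lemma B.2, which you re-derive via kernel non-expansiveness), split each step into a policy term bounded by $\epsilon_{\boldsymbol{\pi}}$ and a model term, apply Pinsker plus the product factorization of $p_1$ and $\log p_2 = \sum_i \log \sqrt[N]{p_2}$ to recover $\epsilon_{m_t}$, and assemble the geometric sums with the summation-order exchange. The only differences are presentational — you prove the telescoping step inline rather than citing it, and you track constants slightly more carefully (your centering argument yields the stated $\epsilon_{\boldsymbol{\pi}}$ coefficient, whereas the paper's own assembly, starting from $2R_{max}$, literally produces $2\epsilon_{\boldsymbol{\pi}}$ in this lemma) — but the decomposition and key steps are identical.
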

\begin{proof}
\begin{align}
\label{eq:J1MJ2}
    \left| J_1(\boldsymbol{\pi}_1)-J_2(\boldsymbol{\pi}_2)\right| &= \left|\sum_{t}\gamma^t \sum_{\boldsymbol{o}, \boldsymbol{a}}\left(p_{1, t}(\boldsymbol{o}, \boldsymbol{a})-p_{2, t}(\boldsymbol{o}, \boldsymbol{a})\right)R(\boldsymbol{o}, \boldsymbol{a}) \right| \nonumber\\
    &\leq \sum_t\sum_{\boldsymbol{o}, \boldsymbol{a}}\gamma^t R_{max}\left|p_{1, t}(\boldsymbol{o}, \boldsymbol{a})-p_{2, t}(\boldsymbol{o}, \boldsymbol{a}) \right|\nonumber\\
    &=2R_{max}\sum_t\gamma^t D_{TV}\left[p_{1, t}(\boldsymbol{o}, \boldsymbol{a})\|p_{2, t}(\boldsymbol{o}, \boldsymbol{a}) \right].
\end{align}
According to Lemma~\ref{lemma:1}, we have:
\begin{align}
\label{eq:DTVpOA}
    D_{TV}\left[p_{1, t}(\boldsymbol{o}, \boldsymbol{a})\|p_{2, t}(\boldsymbol{o}, \boldsymbol{a}) \right] \leq &D_{TV}\left[p_{1, t}(\boldsymbol{o})\|p_{2, t}(\boldsymbol{o}) \right] +\nonumber \\
    &\max_{\boldsymbol{o}}D_{TV}\left[\boldsymbol{\pi}_1(\cdot|\boldsymbol{o})\|\boldsymbol{\pi}_2(\cdot|\boldsymbol{o}) \right].
\end{align}
According to Lemma~\ref{lemma:2}, we can bound the term $D_{TV}\left[p_1^t(\boldsymbol{o})\|p_{2, t}(\boldsymbol{o}) \right]$as follows:
\begin{align}
\label{eq:obsDTV}
    &D_{TV}\left[p_{1, t}(\boldsymbol{o})\|p_{2, t}(\boldsymbol{o}) \right]\nonumber\\  \leq&\sum_t \mathbb{E}_{\boldsymbol{o}\sim\hat{P}_{t-1}(\boldsymbol{o};\boldsymbol{\pi}_1)}D_{TV}\left[p_1(\cdot|\boldsymbol{o})\|p_2(\cdot|\boldsymbol{o}) \right]\nonumber \\
    =&\sum_t \mathbb{E}_{\boldsymbol{o}\sim\hat{P}_{t-1}(\boldsymbol{o};\boldsymbol{\pi}_1)} \frac{1}{2} \sum_{\boldsymbol{o}'}|p_1(\boldsymbol{o}'|\boldsymbol{o})-p_1(\boldsymbol{o}'|\boldsymbol{o})|\nonumber \\
    =&\sum_t \mathbb{E}_{\boldsymbol{o}\sim\hat{P}_{t-1}(\boldsymbol{o};\boldsymbol{\pi}_1)} \frac{1}{2} \sum_{\boldsymbol{o}'}\left|\sum_{\boldsymbol{a}}p_1(\boldsymbol{o}', \boldsymbol{a}|\boldsymbol{o})-p_1(\boldsymbol{o}', \boldsymbol{a}|\boldsymbol{o})\right|\nonumber \\
    \leq&\sum_t \mathbb{E}_{\boldsymbol{o}\sim\hat{P}_{t-1}(\boldsymbol{o};\boldsymbol{\pi}_1)} \frac{1}{2} \sum_{\boldsymbol{o}',\boldsymbol{a}}\left|p_1(\boldsymbol{o}', \boldsymbol{a}|\boldsymbol{o})-p_1(\boldsymbol{o}', \boldsymbol{a}|\boldsymbol{o})\right|\nonumber \\
    =&\sum_t \mathbb{E}_{\boldsymbol{o}\sim\hat{P}_{t-1}(\boldsymbol{o};\boldsymbol{\pi}_1)} D_{TV}\left[p_1(\boldsymbol{o}', \boldsymbol{a}|\boldsymbol{o})\|p_1(\boldsymbol{o}', \boldsymbol{a}|\boldsymbol{o})\right] \nonumber\\
    \leq&\sum_t\mathbb{E}_{\boldsymbol{o}\sim\hat{P}_{t-1}(\boldsymbol{o};\boldsymbol{\pi}_1)}\Big[D_{TV}\left[\boldsymbol{\pi}_1(\cdot|\boldsymbol{o})\|\boldsymbol{\pi}_2(\cdot|\boldsymbol{o}) \right] + \nonumber\\
    &\mathbb{E}_{\boldsymbol{a}\sim\boldsymbol{\pi}_1}D_{TV}\left[p_1(\cdot|\boldsymbol{o}, \boldsymbol{a})\|p_2(\cdot|\boldsymbol{o}, \boldsymbol{a}) \right] \Big] \nonumber\\
    \leq&t\epsilon_{\boldsymbol{\pi}}+\sum_t\mathbb{E}_{\boldsymbol{o}\sim\hat{P}_{t-1}(\boldsymbol{o};\boldsymbol{\pi}_1)}\max_{\boldsymbol{a}}D_{TV}\left[p_1(\cdot|\boldsymbol{o}, \boldsymbol{a})\|p_2(\cdot|\boldsymbol{o}, \boldsymbol{a}) \right].
\end{align}
Applying the Pinsker's inequality to the term $D_{TV}\left[p_1(\cdot|\boldsymbol{o}, \boldsymbol{a})\|p_2(\cdot|\boldsymbol{o}, \boldsymbol{a}) \right]$ and decomposing $p_1$, we have:
\begin{align}
\label{eq:pinsker}
    &D_{TV}\left[p_1(\cdot|\boldsymbol{o}, \boldsymbol{a})\|p_2(\cdot|\boldsymbol{o}, \boldsymbol{a}) \right]\nonumber\\ \leq& \sqrt{\frac{1}{2} D_{KL}\left[p_1(\cdot|\boldsymbol{o}, \boldsymbol{a})\|p_2(\cdot|\boldsymbol{o}, \boldsymbol{a}) \right]}\nonumber\\
    =&\sqrt{\frac{1}{2}\sum_{i=1}^N \mathbb{E}_{\boldsymbol{o}'\sim p_1(\cdot|\boldsymbol{o}, \boldsymbol{a})}\left[ \log\frac{p_1^i(o^{i'}|\boldsymbol{o}, \boldsymbol{a})}{\sqrt[N]{p_2(\boldsymbol{o}'|\boldsymbol{o}, \boldsymbol{a})}} \right] }.
\end{align}
Plugging Eq.~(\ref{eq:pinsker}) in Eq.~(\ref{eq:obsDTV}), we have:
\begin{align}
\label{eq:DTVpO}
    D_{TV}\left[p_{1,t}(\boldsymbol{o})\|p_{2, t}(\boldsymbol{o}) \right]\leq t\epsilon_{\boldsymbol{\pi}}+\frac{1}{2}\sum_{t'=0}^t\epsilon_{m_{t'}}.
\end{align}
Plugging Eq.~(\ref{eq:DTVpO}) in Eq.~(\ref{eq:DTVpOA}), we have:
\begin{align}
\label{eq:DTVpOAResult}
    D_{TV}\left[p_{1, t}(\boldsymbol{o}, \boldsymbol{a})\|p_{2, t}(\boldsymbol{o}, \boldsymbol{a}) \right] \leq
    (t+1)\epsilon_{\boldsymbol{\pi}}+\frac{1}{2}\sum_{t'=0}^t\epsilon_{m_{t'}}.
\end{align}
Further plug Eq.~(\ref{eq:DTVpOAResult}) in Eq.~(\ref{eq:J1MJ2}), then we have:
\begin{align}
    \left| J_1(\boldsymbol{\pi}_1)-J_2(\boldsymbol{\pi}_2)\right|\leq \frac{R_{max}}{(1-\gamma)^2}\left(\epsilon_{\boldsymbol{\pi}}+
(1-\gamma)\sum_{t=1}^\infty\gamma^t\epsilon_{m_t}\right).
\end{align}
\end{proof}

\begin{lemma}
\label{lemma:1}
Suppose that $p_1(x,y)=p_1(x)p_1(y|x)$ and $p_2(x,y)=p_2(x)p_2(y|x)$, then we have:
\begin{align}
    D_{T V}\left(p_{1}(x, y) \| p_{2}(x, y)\right) \leq &D_{T V}\left(p_{1}(x) \| p_{2}(x)\right)+\\&\max _{x} D_{T V}\left(p_{1}(y \mid x) \| p_{2}(y \mid x)\right).
\end{align}
Alternatively, we have a tighter bound in terms of the expected total variation divergence:
\begin{align}
    D_{T V}\left(p_{1}(x, y) \| p_{2}(x, y)\right) \leq &D_{T V}\left(p_{1}(x) \| p_{2}(x)\right)+\\&\mathbb{E}_{x\sim p_1} D_{T V}\left(p_{1}(y \mid x) \| p_{2}(y \mid x)\right).
\end{align}

\end{lemma}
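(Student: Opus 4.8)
The plan is to expand the total variation distance via its definition as a (scaled) $L^1$ distance and then interpolate between the two joint distributions using a carefully chosen cross term. Writing $D_{TV}(p_1(x,y)\|p_2(x,y)) = \frac{1}{2}\sum_{x,y}|p_1(x,y)-p_2(x,y)|$ and substituting the factorizations $p_j(x,y)=p_j(x)p_j(y\mid x)$, the summand becomes $|p_1(x)p_1(y\mid x)-p_2(x)p_2(y\mid x)|$, which exhibits both the marginal mismatch and the conditional mismatch I want to separate.

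First I would add and subtract the cross term $p_1(x)p_2(y\mid x)$ inside the absolute value and apply the triangle inequality to split the summand as $p_1(x)\,|p_1(y\mid x)-p_2(y\mid x)| + p_2(y\mid x)\,|p_1(x)-p_2(x)|$. The choice of this particular cross term, namely keeping the marginal from $p_1$ and the conditional from $p_2$, is the one mildly delicate point: it is precisely what yields an expectation under $p_1$ rather than under $p_2$ in the conditional term, as required by the stated bound. The opposite interpolation would give the wrong weighting.

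Next I would sum over $x$ and $y$ and multiply by $\frac{1}{2}$ to recover the two target quantities. For the marginal piece, $\sum_y p_2(y\mid x)=1$, so $\frac{1}{2}\sum_{x,y} p_2(y\mid x)|p_1(x)-p_2(x)| = \frac{1}{2}\sum_x|p_1(x)-p_2(x)| = D_{TV}(p_1(x)\|p_2(x))$. For the conditional piece, factoring out $p_1(x)$ gives $\sum_x p_1(x)\cdot\frac{1}{2}\sum_y|p_1(y\mid x)-p_2(y\mid x)| = \mathbb{E}_{x\sim p_1}D_{TV}(p_1(y\mid x)\|p_2(y\mid x))$, which establishes the tighter expected-divergence form.

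Finally, the first (looser) inequality follows immediately by bounding the expectation by the supremum, $\mathbb{E}_{x\sim p_1}D_{TV}(p_1(y\mid x)\|p_2(y\mid x)) \leq \max_x D_{TV}(p_1(y\mid x)\|p_2(y\mid x))$. I do not expect any substantial obstacle: this is a standard decomposition, and the only thing to get right is the interpolation direction so that the marginal expectation lands on $p_1$. The continuous-space version, if needed, follows identically by replacing sums with integrals.
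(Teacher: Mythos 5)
Your proof is correct: the cross-term interpolation $p_1(x)p_2(y\mid x)$, the triangle inequality, and marginalizing out each piece give exactly the two stated bounds, and your remark that this particular interpolation direction is what places the conditional expectation under $p_1$ (rather than $p_2$) is indeed the only delicate point, which you handle correctly. Note that the paper itself provides no proof of this lemma---it defers to Appendix~B.1 of the MBPO paper \citep{janner2019trust}---and the argument given there is this same standard decomposition, so your self-contained proof coincides with the one the paper relies on.
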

Lemma~\ref{lemma:1} is proved in the MBPO paper~\citep{janner2019trust} (Appendix B.1), so we only provide the result here.

\begin{lemma}
\label{lemma:2}
\begin{align}
    &D_{TV}\left[p_{1, t}(\boldsymbol{o})\|p_{2, t}(\boldsymbol{o}) \right]\nonumber\\  \leq&\sum_t \mathbb{E}_{\boldsymbol{o}\sim\hat{P}_{t-1}(\boldsymbol{o};\boldsymbol{\pi}_1)}D_{TV}\left[p_1(\cdot|\boldsymbol{o})\|p_2(\cdot|\boldsymbol{o}) \right].
\end{align}
\end{lemma}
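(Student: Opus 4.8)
The plan is to prove the bound by induction on the rollout step $t$, exploiting the fact that both $p_{1,t}(\boldsymbol{o})$ and $p_{2,t}(\boldsymbol{o})$ are marginals of Markov chains that share a common initial distribution and differ only in their one-step observation kernels $p_1(\cdot|\boldsymbol{o})$ and $p_2(\cdot|\boldsymbol{o})$. First I would record the recursion $p_{1,t}(\boldsymbol{o}') = \sum_{\boldsymbol{o}} \hat{P}_{t-1}(\boldsymbol{o};\boldsymbol{\pi}_1)\, p_1(\boldsymbol{o}'|\boldsymbol{o})$ together with the analogous identity for $p_{2,t}$, so that the step-$t$ marginal distance is expressed in terms of step-$(t-1)$ quantities. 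The base case $t=0$ is immediate, since the chains start from the same distribution and $D_{TV}=0$. (Equivalently one could unroll the recursion into the telescoping identity $p_{1,t}-p_{2,t}=\sum_{k} T_2^{\,t-1-k}(T_1-T_2)\,p_{1,k}$ for the transition operators $T_1,T_2$, but the inductive form matches the statement most directly.)

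For the inductive step, I would expand $D_{TV}[p_{1,t}\|p_{2,t}] = \tfrac12\sum_{\boldsymbol{o}'}|p_{1,t}(\boldsymbol{o}') - p_{2,t}(\boldsymbol{o}')|$ via the recursion and then insert the cross term $\hat{P}_{t-1}(\boldsymbol{o};\boldsymbol{\pi}_1)\,p_2(\boldsymbol{o}'|\boldsymbol{o})$, namely the previous-step marginal of the first chain propagated one step by the second kernel. Adding and subtracting this term and applying the triangle inequality splits the sum into (i) a term measuring the discrepancy of the two kernels weighted by $\hat{P}_{t-1}(\cdot;\boldsymbol{\pi}_1)$, and (ii) a term measuring the discrepancy of the previous-step marginals propagated by $p_2$. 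After swapping the order of summation, term (i) collapses to $\mathbb{E}_{\boldsymbol{o}\sim \hat{P}_{t-1}(\boldsymbol{o};\boldsymbol{\pi}_1)}D_{TV}[p_1(\cdot|\boldsymbol{o})\|p_2(\cdot|\boldsymbol{o})]$, which is exactly the new summand we want. Term (ii) reduces to $D_{TV}[p_{1,t-1}\|p_{2,t-1}]$, because $p_2(\cdot|\boldsymbol{o})$ is a probability kernel and hence $\sum_{\boldsymbol{o}'}p_2(\boldsymbol{o}'|\boldsymbol{o})=1$ (the standard non-expansiveness of a Markov operator in total variation). Invoking the induction hypothesis on term (ii) and combining the two pieces yields the telescoped sum claimed in the lemma.

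The single subtle point, and the step I would be most careful about, is the choice of cross term: I deliberately add and subtract $\hat{P}_{t-1}(\cdot;\boldsymbol{\pi}_1)\,p_2(\cdot|\cdot)$ (old marginal, new kernel) rather than $p_{2,t-1}\,p_1$, since only this choice makes term (i) an expectation under the \emph{first} chain's marginal $\hat{P}_{t-1}(\cdot;\boldsymbol{\pi}_1)$, matching the statement, while leaving term (ii) as the previous marginal distance that the induction hypothesis controls. Everything else is routine manipulation with the triangle inequality and Fubini. I would also flag for clarity that the summation index in the statement should be read as a dummy time index $t'$ running from $1$ to $t$ (with the expectation taken under $\hat{P}_{t'-1}(\cdot;\boldsymbol{\pi}_1)$), not the free variable $t$ appearing on the left-hand side.
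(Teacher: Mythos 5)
Your proof is correct and is essentially the same argument the paper relies on: the paper does not prove this lemma itself but defers it to Appendix~B.2 of the MBPO paper~\citep{janner2019trust}, whose proof is precisely your induction/telescoping argument with the same choice of cross term $\hat{P}_{t-1}(\boldsymbol{o};\boldsymbol{\pi}_1)\,p_2(\boldsymbol{o}'|\boldsymbol{o})$ (old marginal of the first chain, kernel of the second), the triangle inequality, and the non-expansiveness of the Markov operator in total variation. Your closing remark about the notation is also accurate---the summation index in the paper's statement is a dummy time index $t'$ running from $1$ to $t$, not the free $t$ on the left-hand side.
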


Lemma~\ref{lemma:2} is proved in the MBPO paper~\citep{janner2019trust} (Appendix B.2), so we only provide the result here.

\subsection{The Tightness of the Bound in Theorem~1}
In Section~3, we mention that the bound in Theorem~1 is tighter than the bound induced by scaling up $\epsilon_{m_t}$ to its maximum over timesteps, i.e., $\max_t\epsilon_{m_t}$.
Now we give a formal proof of the above claim.

\begin{proof}
If $\epsilon_{m_t}$ is scaled up to its maximum over timesteps in Lemma~\ref{lemma:J1MJ2}, then Eq.~(\ref{eq:obsDTV}) becomes:
\begin{align}
D_{TV}\left[p_{1, t}(\boldsymbol{o})\|p_{2, t}(\boldsymbol{o}) \right]\leq
    t\left(\epsilon_{\boldsymbol{\pi}}+\delta\right),
\end{align}
where $\delta=\max_{\boldsymbol{a},t}\mathbb{E}_{\boldsymbol{o}\sim \hat{P}_{t-1}(\boldsymbol{o};\boldsymbol{\pi})}D_{TV}\left[p_1(\cdot|\boldsymbol{o}, \boldsymbol{a})\|p_2(\cdot|\boldsymbol{o}, \boldsymbol{a}) \right]$.
By repeatedly plugging equations like the proof of Lemma~\ref{lemma:J1MJ2}, the induced bound becomes:
\begin{align}
\label{eq:tightness}
    \left| J(\boldsymbol{\pi})-J^{\hat{P}}(\boldsymbol{\pi})\right|\leq 
    \frac{R_{max}}{(1-\gamma)^2}\left(2\epsilon_{\boldsymbol{\pi}}+
\gamma\delta\right).
\end{align}
Subtracting Eq.~(\ref{eq:upperbound}) by Eq.~(\ref{eq:tightness}), we have:
\begin{align}
    &\frac{R_{max}}{(1-\gamma)^2}\left(2\epsilon_{\boldsymbol{\pi}}+
(1-\gamma)\sum_{t=1}^\infty\gamma^t\epsilon_{m_t}\right) - \frac{R_{max}}{(1-\gamma)^2}\left(2\epsilon_{\boldsymbol{\pi}}+
\gamma\delta\right)\nonumber  \\
=&\frac{R_{max}}{(1-\gamma)^2}\left(\gamma\delta - (1-\gamma)\sum_{t=1}^\infty\gamma^t\epsilon_{m_t} \right)\nonumber \\
\geq&\left(\gamma\delta - (1-\gamma)\sum_{t=1}^\infty\gamma^t\delta \right)=0, 
\end{align}
In other words, the bound in Theorem~1 is tighter than the bound in Eq.~(\ref{eq:tightness}), thus proving the claim.

\end{proof}

\section{Implementation Details}
The implementation of MAG is overall built on MAMBA~\citep{egorov2022scalable}, and the network structures as well as the hyperparameter settings of the underlying MAMBA component remain completely the same as the original implementation of MAMBA~\footnote{https://github.com/jbr-ai-labs/mamba}, except that the learning rate of the model is set to 5e-4 in MAG.
The incremental network component of MAG w.r.t. MAMBA is a fully-connected network that approximates $R_m$, which has 4 layers and 256 neurons in each layer.
As discussed in Section~4, there are mainly two extra hyperparameters in MAG that need to be tuned, i.e., the planning horizon $H$ and the number of random shooting trajectories $L$.
Here we specify the settings of these two hyperparameters in Table~\ref{tab:hyper}.
\begin{table}[h]

    \centering
    \begin{tabular}{lll}
    \hline
        ~ & L & H \\ \hline
        2s\_vs\_1sc & 5 & 10 \\ 
        3s\_vs\_3z & 5 & 10 \\
        2s3z & 5 & 10 \\ 
        3s\_vs\_4z & 4 & 10 \\ 
        3s\_vs\_5z & 5 & 7 \\
        2c\_vs\_64zg & 5 & 7 \\ 
        corridor & 4 & 6 \\ 
        3s5z\_vs\_3s6z & 4 & 7 \\
    \hline
    \end{tabular}
    \captionof{table}{Settings of $L$ and $H$.}
    \label{tab:hyper}
\end{table}

\end{document}